\documentclass[letterpaper,10pt]{article}
\usepackage[margin=1in]{geometry}

\usepackage[utf8]{inputenc} % allow utf-8 input
\usepackage[T1]{fontenc}    % use 8-bit T1 fonts
\usepackage{url}            % simple URL typesetting
\usepackage{booktabs}       % professional-quality tables
\usepackage{ragged2e}
\usepackage{amsfonts}       % blackboard math symbols
\usepackage{nicefrac}       % compact symbols for 1/2, etc.
\usepackage{microtype}      % microtypography
\usepackage{comment}
\usepackage{amsthm,amsmath,amssymb,epsf, epsfig, color,enumerate,amssymb,ytableau}
\usepackage{algorithm}
\usepackage[noend]{algorithmic}
\usepackage{tikz}
\usepackage{subcaption}
\usepackage{caption} 
\usepackage{enumerate}
\usepackage{breqn}
\usepackage{algorithm}
\usepackage[noend]{algorithmic}
\usepackage[bookmarks=false]{hyperref}

\let\hide\iffalse
\let\unhide\fi% rather than \show

\begin{document}\sloppy

\let\hide\iffalse
\let\unhide\fi% rather than \show

\title{The Wedge Picking Model: A dynamic graph model based on triadic closure\footnote{This manuscript is a full presentation of what has appeared as ``Densest subgraph discovery in evolving graphs'' in the sets and partitions workshop, NeurIPS 2019 \& ``A theoretical analysis of graph evolution caused by triadic closure and algorithmic implications'' in IEEE BigData 2020.}}

\date{}
\author{Sara Ahmadian, Shahrzad Haddadan\footnote{Both authors contributed equally to this paper}}

\maketitle

\begin{abstract}
Social networks have become an inseparable part of human life and processing them in an efficient manner is a top priority in the study of networks. These networks are highly dynamic and they are growing incessantly. Inspired by the concept of \emph{triadic closure}, we propose a probabilistic mechanism to model the evolution of these dynamic graphs. Although triadic closure is ubiquitous in social networks and its presence helps forming communities, probabilistic models encapsulating it have not been studied adequately. 

We theoretically analyze our model and show how to bound the growth rate of some characteristics of the graph, such as degree of vertices. Leveraging our theoretical results, 
we develop a scheduling subroutine to process modifications of the graph in batches. Our scheduling subroutine is then used to speed up the state-of-the-art algorithms with negligible loss in their approximation guarantees. We demonstrate the applicability of our method by applying it to the densest subgraph and tri-densest subgraph discovery problem. 
\end{abstract}

\maketitle
%----------------------------------------------------------------------------
%		PERSONAL MACROS
%----------------------------------------------------------------------------
%%%PACKAGES

\newcommand{\cohort}{cluster }
\newcommand{\cohorts}{clusters }
\newcommand{\tsD}{{\textsf{t-Star-Decomposition }}}

%%%%%%% SHERRY%%%%%%%%%%%%%%%%%%%%%%%%%%%
%%NEWCOMMANDS

%format 

%\newcommand{\G}{{\cal G}}
\newcommand{\EX}{{\cal X}}
\newcommand{\J}{{\cal J}}
\newcommand{\N}{{\mathbb{N}}}
\newcommand{\R}{\mathbb{R}}
\newcommand{\Ho}{{\cal H}}
\newcommand{\M}{{\cal M}}
\newcommand{\Lc}{{\cal L}}

%% Math operator

%\DeclareMathOperator*{\argmin}{arg\min}

%ENVIROMENT

\newtheorem{defin}{Definition}[section]
\newtheorem{lemma}{Lemma}[section]
\newtheorem{thm}{Theorem}[section]
\newtheorem{prop}[thm]{Proposition}
\newtheorem{claim}[thm]{Claim}
\newtheorem{remark}[thm]{Remark}
\newtheorem{con}[thm]{Corollary}
\newtheorem{conj}[thm]{Conjecture}
\newtheorem{exm}[thm]{Example}
\newtheorem{prob}[thm]{Problem}
\newcommand{\arx}[1]{\href{http://arxiv.org/abs/#1}{\texttt{arXiv:#1}}}

\renewcommand\qedsymbol{$\blacksquare$}
\renewcommand{\algorithmicrequire}{\textbf{Input:}}
\renewcommand{\algorithmicensure}{\textbf{Output:}}

%%%%%%%%%%%
%%%   MATH   %%%
%%%%%%%%%%%

%Greek Alphabet

\newcommand{\eps}{{\epsilon}}
\newcommand{\al}{\ensuremath{\alpha}}
\newcommand{\dt}{\ensuremath{\delta}}
\newcommand{\sg}{\ensuremath{\sigma}}
\newcommand{\ld}{\ensuremath{\lambda}}
\newcommand{\Om}{\ensuremath{\Omega}}
\newcommand{\e}{\ensuremath{\epsilon}}
\newcommand{\ve}{\ensuremath{\varepsilon}}
\newcommand{\gm}{\ensuremath{\gamma}}
\newcommand{\Gm}{\ensuremath{\Gamma}}
\newcommand{\kp}{\ensuremath{\kappa}}
\newcommand{\Tau}{\mathrm{T}}

% (Cal)
\newcommand{\cA}{\ensuremath{\mathcal{A}}}
\newcommand{\cCB}{\ensuremath{\mathcal{CB}}}
\newcommand{\cC}{\ensuremath{\mathcal{C}}}
\newcommand{\cD}{\ensuremath{\mathcal D}}
\newcommand{\cF}{\ensuremath{\mathcal F}}
\newcommand{\cG}{\ensuremath{\mathcal{G}}}
\newcommand{\cI}{\ensuremath{\mathcal I}}
\newcommand{\cL}{\ensuremath{\mathcal L}}
\newcommand{\cN}{\ensuremath{\mathcal N}}
\newcommand{\cP}{\ensuremath{\mathcal P}}
\newcommand{\cS}{\ensuremath{\mathcal S}}
\newcommand{\cU}{\ensuremath{\mathcal U}}

% (MathBB)
\newcommand{\bR}{\ensuremath{\mathbb R}}
\newcommand{\bZ}{\ensuremath{\mathbb Z}}
\newcommand{\bQ}{\ensuremath{\mathbb Q}}
\newcommand{\bN}{\ensuremath{\mathbb N}}

\newcommand{\RR}{\mathbb{R}}
\newcommand{\RRN}{\mathbb{R}_{\ge 0}}
\newcommand{\QQ}{\mathbb{Q}}
\newcommand{\ZZ}{\mathbb{Z}}
\newcommand{\ZZN}{\mathbb{Z}_{\ge 0}}

% vectors
\newcommand{\0}{\ensuremath{\pmb{0}}}
\newcommand{\1}{\ensuremath{\pmb{1}}}
\newcommand{\g}{\ensuremath{\pmb{g}}}
\newcommand{\x}{\ensuremath{\pmb{x}}}
\newcommand{\y}{\ensuremath{\pmb{y}}}

% Math Operator
\newcommand*{\pr}{^{\prime}\mkern-1.2mu}
\newcommand*{\dpr}{^{\prime\prime}\mkern-1.2mu}
\newcommand*{\tpr}{^{\prime\prime\prime}\mkern-1.2mu}

\newcommand{\ceil}[1]{\ensuremath{\left\lceil#1\right\rceil}}
\newcommand{\floor}[1]{\ensuremath{\left\lfloor#1\right\rfloor}}
\newcommand{\poly}{\operatorname{\mathsf{poly}}}
\newcommand{\polylog}{\operatorname{polylog}}

\newcommand{\sm}{\ensuremath{\setminus}}
\newcommand{\es}{\ensuremath{\emptyset}}
\newcommand{\sse}{\subseteq}
\newcommand{\frall}{\ensuremath{\text{ for all }}}
\newcommand{\freach}{\ensuremath{\text{ for each }}}

%%%%%%%%%%%%
%%%  LP related %%%
%%%%%%%%%%%%

\newcommand{\OPT}{\mbox{\sc OPT}}
\newcommand{\lp}{{\small \textsf{LP}}\xspace}
\newcommand{\lps}{{\small \textsf{LP}s}\xspace}
\newcommand{\opt}{\ensuremath{{\mathit{opt}}}}
\newcommand{\iopt}{\ensuremath{O^*}}
\newcommand{\lpopt}{\ensuremath{\mathit{OPT}}}

%%%%%%%%%%%%
%%% Hardness  %%%
%%%%%%%%%%%%

\newcommand{\p}{{\em P}\xspace}
\newcommand{\np}{{\em NP}\xspace}
\newcommand{\nphard}{\np-hard\xspace} 
\newcommand{\npcomplete}{\np-complete\xspace}
\newcommand{\apx}{{\em APX}\xspace}
\newcommand{\apxhard}{\apx-hard\xspace}

% Probname
\newcommand{\minsat}{\ensuremath{\mathsf{MIN}\ \mathsf{SAT}}\xspace}

%%%%%%%%%%%%%
%%%   Probability   %%%
%%%%%%%%%%%%%

\newcommand{\expct}[1]{\ensuremath{\text{{\bf E}$\left[#1\right]$}}}
\newcommand{\size}[1]{\ensuremath{\left|#1\right|}}

%%%%%%%%%%%%%
%%%   FL prob    %%%%
%%%%%%%%%%%%%

\newcommand{\dij}{\ensuremath{d(i,j)}}
\newcommand{\dijp}{\ensuremath{d(i,j')}}
\newcommand{\dipj}{\ensuremath{d(i',j)}}
\newcommand{\dipjp}{\ensuremath{d(i',j')}}

\newcommand{\dir}{\ensuremath{d(i,r)}}
\newcommand{\di}{\ensuremath{d_i}}

\newcommand{\z}{\ensuremath{z}}

\newcommand{\f}{\ensuremath{f}}

%Analysis
\newcommand{\cost}{\ensuremath{\mathit{cost}}}
\newcommand{\maxcost}{\ensuremath{\mathit{maxr}}}
\newcommand{\charge}{\ensuremath{\mathit{charge}}}
\newcommand{\weight}{\ensuremath{\mathit{weight}}}
\newcommand{\val}{\ensuremath{\mathit{value}}}

%%%%%%%%%%%%%
%% PSEUDO CODE %%
%%%%%%%%%%%%%

%\algnewcommand\algorithmicinput{\textbf{Input:}}
%\algnewcommand\INPUT{\item[\algorithmicinput]}
%\algnewcommand\algorithmicoutput{\textbf{Output:}}
%\algnewcommand\OUTPUT{\item[\algorithmicoutput]}

%%%%%%%%%%
% Facility Location %
%%%%%%%%%%

\newcommand{\ufl}{\ensuremath{\mathsf{UFL}}\xspace}
\newcommand{\fl}{\ensuremath{\mathsf{FL}}\xspace}
\newcommand{\cfl}{{\small \textsf{CFL}}\xspace}
\newcommand{\mfl}{{\small \textsf{MFL}}\xspace}
\newcommand{\load}{\ensuremath{\mathsf{Load}}}
\newcommand{\ml}{\ensuremath{\mathsf{Max\text{-}Load}}}
\newcommand{\mlkfl}{\ensuremath{\mathsf{ML}k\mathsf{FL}}\xspace}
\newcommand{\mmfl}{\ensuremath{\mathsf{MFL}}}
\newcommand{\lbkc}{\ensuremath{\mathsf{LB}k\mathsf{Cent}}\xspace}
\newcommand{\lbks}{\ensuremath{\mathsf{LB}k\mathsf{Sup}}\xspace}
\newcommand{\ulbkc}{\ensuremath{\mathsf{ULB}k\mathsf{C}}\xspace}
\newcommand{\mlsc}{\mlkfl\xspace}
\newcommand{\lbfl}{\ensuremath{\mathsf{LBFL}}\xspace}
\newcommand{\ulbfl}{{\small \textsf{ULBFL}}\xspace}

\newcommand{\lbkco}{\ensuremath{\mathsf{LB}k\mathsf{CentO}}\xspace}
\newcommand{\lbkso}{\ensuremath{\mathsf{LB}k\mathsf{SupO}}\xspace}
\newcommand{\ckco}{\ensuremath{\mathsf{C}k\mathsf{CO}}\xspace}
\newcommand{\ckso}{\ensuremath{\mathsf{C}k\mathsf{SO}}\xspace}
\newcommand{\lbksr}{\ensuremath{\mathsf{LB}k\mathsf{SR}}\xspace}
\newcommand{\lbksro}{\ensuremath{\mathsf{LB}k\mathsf{SRO}}\xspace}
\newcommand{\kbs}{\ensuremath{k\text{-}\mathsf{BS}}\xspace}
\newcommand{\ksr}{\kbs}
\newcommand{\flsr}[1]{\ensuremath{\mathsf{FLSR}(#1)}\xspace}
\newcommand{\cc}{\ensuremath{\mathsf{CellC}}\xspace}

\newcommand{\rad}{\ensuremath{\mathsf{rad}}\xspace}
\newcommand{\Rad}{\ensuremath{\mathsf{Rad}}\xspace}
\newcommand{\uncov}{\ensuremath{\mathsf{uncov}}\xspace}
\newcommand{\uc}{\ensuremath{\mathsf{uc}}\xspace}
\newcommand{\outl}{\ensuremath{\mathit{Out}}\xspace}
\newcommand{\AT}{\ensuremath{\mathit{AT}}\xspace}
\newcommand{\AD}{\ensuremath{\mathit{AD}}\xspace}
\newcommand{\dist}{\ensuremath{\mathit{dist}}\xspace}

% PD ALG
\newcommand{\pd}{\ensuremath{PD}\xspace}
\newcommand{\pdalg}{\ensuremath{\mathsf{PDAlg}}\xspace}
\newcommand{\pdo}{\ensuremath{\mathsf{PDAlg^o}}\xspace}
\newcommand{\kbalg}{\ensuremath{k\text{-}\mathsf{BSAlg}}\xspace}
\newcommand{\kbo}{\ensuremath{k\text{-}\mathsf{BSAlg^o}}\xspace}

%**************************** OPT **************************%
\newcommand{\OPTsr}{\ensuremath{\sc OPT_{\sum r}}}
\newcommand{\optsr}{\ensuremath{{\mathit{opt_{\sum r}}}}}

\section{Introduction}

Structural properties of social network distinguish them from other graphs. These properties include  bounded degeneracy, heavy tailed degree distribution, high density of triangles. 
Various  generative models have been suggested in the literature to study or explain these properties. The most celebrated of these generative models are the preferential attachment model \cite{PrefModel}, the copying model  \cite{Kumar:2000:SMW:795666.796570}, or  Chung-Lu random graph model \cite{chungmodel}.  Each of these models explain some property of social networks but not all of them; see  \cite{Chakrabarti:2006} for a comparative survey. Among all of these models, only a few have encapsulated an important property of social networks: \emph{ triadic closure} \cite{Robins2009ClosureCA,Brunson2015TriadicAO,Bianconi:1741450}.  
\smallskip 

Put simply, triadic closure is the tendency of a disconnected pair of users with a large common neighborhood to form connection. 
German sociologist George Simmel coined the term in 1908, suggesting that if there is a strong connection between A and B and there is a strong connection between A and C, then there must exist a strong or weak connection between B and C.

\smallskip 

Triadic closure is a prominent ingredient  for formation of modularity in networks \cite{mod} and it has been shown that most of generative models such as  Barab\'asi-Albert model do not generate modularity without additionally  enforcing triadic closure \cite{Bianconi:1741450}.  Thus, deep mathematical understanding of models explaining triadic closure is crucial.  
Nonetheless, the generative models based on triadic closure have not been theoretically analyzed, and little information about their growth rate or other properties is known.

\smallskip

Among theoretical computer scientists, a new trend of studying social networks known as ``distribution free models'' has recently emerged  \cite{roughgarden2020distributionfree}. Such studies leverage on specific properties of social networks to design tractable algorithms for expensive and in particular NP-hard problems. In this context the 
``c-closure''  property was introduced by Fox et al \cite{Roughgarden2018}, and it sparked notable research \cite{cclose1, cclose2, cclose3}. A c-closed graph is a graph in which any two vertices $v$ and $u$ having a common neighborhood larger than $c$ are connected. Problems such as listing all maximal cliques, or computing dense or sparse subgraphs, which are NP-hard in general are tractable for c-closed graphs \cite{Roughgarden2018, koana2020}. Distribution free models  and in particular c-closure have an important limitation: they fall short in modeling dynamic graphs which are more realistic in the context of social networks. 

\smallskip

Inspired by Fox et al and subsequent work on c-closed graphs, our goal in this paper is to introduce a probabilistic  dynamic model based on triadic closure, and analyze it theoretically with the goal of designing algorithms.

In particular,   we  seek answer to the following questions:

\begin{enumerate}
    \item 
 How does triadic closure explain the evolution of dynamic graphs? 

\item Can we theoretically analyze the effect of triadic closure on the growth rate of degree of vertices of a dynamic graph?

\item Can we use such bounds to  design better algorithms for dynamic graphs (or to maintain previously calculated solutions)?
\end{enumerate}

We provide answers to the above question by (1) introducing the \emph{wedge picking model},  a probabilistic model for dynamic graphs based on  triadic closure (2) presenting a theoretical analysis of the wedge picking model and in particular the growth rate of its local properties such as the degree of each vertex (3) after verifying that the wedge picking model explains the evolution of an input dynamic graph and estimating the parameters of the model in a pre-processing phase, we use the above theoretical results to design a speed up subroutine which we call the \emph{rest and run} strategy. We  employ our proposed method to an important primitive of graph mining: densest subgraph discovery. 
 
  \smallskip 
  
 The rest and run strategy works as an scheduling subroutine determining periods of ``rest'' and ``run'' for some optimization problems, thus improving their run time. The idea is to identify pre-calculated intervals for ``resting'' while theoretically ensuring that with high probability the quality of solution will not degrade by much. After the ``rest'' phase, we read a batch of updates and start the ``run'' phase in which we run the original algorithm.
 
\subsection*{Related Work} 
Study of probabilistic models for networks goes back to 1959 when pioneers of graph algorithms, Erd\"os and R\'enyi, introduced their model. Since Erd\"os–R\'enyi model does not explain most of the properties of real networks, other models like Watts–Strogatz model \cite{WWmodel} have been introduced. The Watts–Strogatz model captures the ``six degree of seperation'' in real network: despite having low average degree, most of  vertices are reachable from each other via at most six hops. Their model also captures  the existence of communities.

\smallskip 

Real world networks have been analyzed vastly with the motivation to identify properties  distinguishing them from regular random networks. Such analysis can help design of generative models.  For instance, Ugander et al. \cite{Ugander2011} analyzed  social graph of active Facebook users, and reported properties such as: degree distribution, path length, component size, clustering coefficient, degeneracy, degree correlation, etc.  
Various models have been proposed for explaining different properties of such networks (see, e.g., \cite{Chakrabarti:2006}). Only to explain degree distribution various models have been suggested which are all based on the ``rich gets richer'' phenomenon: Barab\'asi-Albert \cite{Barbarasi} suggests the preferential attachment model which generates graphs with power law degree distribution. Pennock et al. \cite{pennock} introduced graphs with a similar degree distribution: the Discrete
Gaussian Exponential distribution. Amaral et al \cite{amral} observed exponential cut offs  in electric power-grid graph. Among all of these models, Barab\'asi-Albert model has been used widely and up to today particularly in the context of social networks. In this model a new vertex joins the network and it gets connected to each vertex $u$ with probability proportional to degree of $u$. 

\smallskip

Bianconi et al. \cite{Bianconi:1741450} observed that Barab\'asi-Albert graphs do not contain modular components like real networks do. They suggested two models to incorporate triadic closure and thus to create modularity. In the first (simpler) model, a vertex $u$ joins the network and it gets connected to a random vertex in the graph. For each of the next edges the end point is chosen through two rules: w.p. $p$ it is added to a randomly chosen vertex, and w.p. $1-p$  it is added to a randomly chosen vertex from neighbors of its neighbors. In the second model,  they introduce a \emph{fitness} factor for vertices. The probability of choosing end points for new edges will be proportional to the fitness factor.  Bianconi et al's model generalizes Barab\'asi-Albert's as setting $p=0$ and choosing  fitness factor equal to degree makes the two models equivalent. 

\smallskip

A similar line of research uses triadic closure for link prediction: 
Huang et al \cite{wwwrel, relhuang} study formation of closed triads. Further they  examine  affects of formation of the third tie in a triad  on the strength of the existing two ties \cite{huangKDD2018}.
 They incorporate  demographics and network topology for such predictions.
 Estrada et al \cite{relestra} use communicability distance functions to predict triadic closure. Romero and Kleinberg \cite{relKl} analyze a \emph{link copying}  regime which is similar to triadic closure. Despite a long list of similar work \cite{rel1, rel2,rel3, rel4}, none of them presents a theoretical analysis of triadic closure based solely on graph topology, its affect on rate of degree growth and its possible algorithmic implications. 

\medskip 

 We finish this introduction by highlighting our paper's contribution together with a road map to read the paper.

\subsection*{Our Contribution.}
\begin{enumerate}
\item In Section \ref{Sec:2}, we introduce the wedge picking model which  is a probabilistic generalization of c-closure, tailored to dynamic graphs. We estimate the parameters of our model  in a pre-processing phase. By doing so we  are also able to assert whether the evolution of input graph fits into the wedge picking model. 

\item In section \ref{sec:analys}, we analyze the wedge picking model and bound degree growth (as well as tri-degree defined as the number of triangles on a vertex) for each vertex in the network. 
\item In Section \ref{sec:restandrun}, we present our algorithmic contribution which is the introduction of the rest and run strategy. This strategy can be employed as a scheduling subroutine to boost the efficiency of other algorithms with negligible  effect on their accuracy.

In Subsection \ref{sec:algdens}, we show how to apply the rest and run strategy for approximating the densest subgraph (and tri-densest subgraph in the appendix). In particular, we incorporate the rest and run strategy in the algorithm of \cite{Epasto:2015}, and we show the approximation ratio of $4(1+\epsilon)$ for the modified algorithm. 
 
\item In Section \ref{sec:exp}, we complement our theoretical results by presenting experimental analysis. Our experimental analysis is twofold: (1) We compare the quality and the run time of our algorithms in comparison to previous work.  We observe that the rest and run strategy can effectively improve the speed by a factor of 1000 on large data sets while not degrading the quality of the solution. (2) 
We run our algorithms on two types of data sets: (a) real-world social graphs whose evolution is generated by our model, and  (b) dynamic real world graphs 
of ``Facebook wall posts'' and ``YouTube users'' (see Table \ref{tab:realdata}).
We compare the run time and accuracy of algorithms for synthetic and organic evolution and observe similar results. Thus, we demonstrate the applicability of our model on real world social graphs. 
 \end{enumerate}

\section{The model} \label{Sec:2}

In this section, we introduce our model. Let $G_t=\langle V,E_t \rangle$ be an undirected dynamic graph with a fixed vertex set $V$ and evolving edge set $E_t$ at time $t$.\footnote{For simplicity we assume that $G_t$ is undirected but our results can easily be modified to work for directed graphs. We defer this discussion to the full version of the paper.} The assumption of fixed vertex set is without loss of generality as we can assume that all the vertices exist in the graph from the initial state, and those vertices added to the graph in the course of evolution, initially have degree zero. Therefore, we focus on how $E_t$ evolves through time. \smallskip

Our model is a probabilistic  generalization of c-closure. Consequently for any disconnected pair of vertices $u$ and $v$ we assume the probability that they become connected is a function of their common neighborhood denoted by $d(u,v)$. 
Since  our experimental observations presented in Section \ref{sec:exp} show a linear dependence, we assume this probability is  $ad(u,v)+b$, for some $a$ and $b$ (learned in a pre-processing phase).

Edge removal occurs at a significantly lower rate compared to edge addition in social graphs. In our model, we choose uniform edge removal for deleting an edge, i.e., we pick a pair of connected vertices uniformly at random as a candidate for edge removal. Section \ref{sec:def} is devoted to a rigorous presentation of our model; this presentation has a combinatorial form which is simple and easy to analyze. 

\smallskip

Throughout this paper, we use the following notation: Let $G_{t} = (V, E_t)$ denote the graph with $n$ vertices at time step $t$, $G_{0}$ denote the initial graph, and  $N_{t}(v)$  the set of neighbors of vertex $v$ at time $t$, i.e., $N_{t}(v)=\{u\in V \vert (u,v)\in E_t\}$. We use $N_{t}(u,v)$ to denote the common neighborhood of vertices $u$ and $v$ at time $t$, i.e., $N_{t}(u,v)=N_{t}(u)\cap N_{t}(v)$. Let $d_t(v)$ denote the degree of vertex $v$ at time $t$ and $d_{t}(u,v)=\vert N_{t}(u,v)\vert$. To denote an induced degree, induced neighborhood or induced  common neighborhood in a subgraph $G'$, we mention the subgraph $G'$ explicitly and use the previous notation. We use the notation $\Gamma_t(v)$ to denote the number of wedges with one end point $v$, i.e., the number of paths of length two starting at $v$, and we use $\Gamma_t(V)$ or $\Gamma_t$ for short to denote the total number of wedges.

\subsection{The wedge picking model}\label{sec:def}
Let $p$, $q$ and $r$ be constants. Throw a fair coin to decide whether to follow rule (i) or (ii):

\begin{enumerate}[(i)]
    \item Pick a wedge, a path of length $2$, uniformly at random.   If the picked wedge is open (its endpoints are disconnected),  then \emph{close} it with probability  $p$ (place an edge between the endpoints). Note that one can pick a wedge uniformly at random by picking the midpoint vertex $v$ with probability proportional to ${{d_t(v)}\choose{2}}$, and then choosing the endpoints uniformly at random from $N_{t}(v)$. Furthermore by picking a wedge uniformly at random the probability of placing  an edge between a disconnected  pair of vertices $u$ and $v$ will be proportional to $d_t(u,v)$.
    
    \item Pick a pair of vertices uniformly at random if they are disconnected, connect them with probability $r$, if they are connected, disconnect them with probability $q$.
\end{enumerate}

So we can conclude that a disconnected pair $u$ and $v$ gets connected with probability $0.5(p \frac{d_t(u,v)}{\Gamma_t}+r{{n}\choose{2}}^{-1})$ which is linear in $d(u,v)$. Moreover, a connected pair becomes disconnected with probability $0.5(q \frac{d_t(u,v)}{\Gamma_t}+q{{n}\choose{2}}^{-1})$.

\subsection{ Parameters of the model}\label{sec:learnparams}

In this section, we explain how we learn whether or not a graph is evolving through the wedge picking model and if it does how to estimate the parameters $p$, $q$ and $r$. 
\medskip

\noindent\emph{\textbf{Learning $p$ and $r$.} }
In a pre-processing phase, we track edge additions between time $t$ and $t+\Delta$ where time $t$ is an initial time and parameter $\Delta$ will be determined later. Let $f(x)$ be the number of disconnected pairs with $x$ common neighbors in $G_t$ that become connected in $G_{t+\Delta}$, i.e., 
$f(x)=\vert \{(u,v): d_{t}(u,v)=x, (u,v)\in E_{t+\Delta} \setminus E_t\}\vert$.
For each $x \in \mathbb{Z}_+$, we also calculate $N_{t}(x)= \vert \{(u,v); d_t(u,v)=x, (u,v)\notin E_t\}\vert $. Using discussion in previous section, we know that the probability of edge addition is linear in $d_t(u,v)$; we assume it is $a d_t(u,v) + b$. We plot $x$ vs $\frac{f(x)}{N_t(x)}$, i.e., the probability of a pair of vertices with $x$ common neighbors becoming connected. Using linear regression we fit a line $ax + b $ to this plot and   conclude: 
\begin{equation}\label{eq:linear}
    \frac{f(x)}{N_t(x)}= a x + b.
\end{equation}

We use the magnitude of coefficient of determination as a certificate for guaranteeing a linear relation. For a coefficient of at least\footnote{In Algorithm \ref{algo:learn} this threshold can be set to any arbitrary value $c$ } $0.6$, we can conclude that the model follows our assumptions. Using the slope and intercept of the line then we can extract the values of $p$ and $r$ via Bayes's rule. 

\begin{thm}\label{thm:learnp}
If a graph evolves through the wedge picking model with parameters $p$ and $r$.
In Equation \ref{eq:linear},  we have:
$a=p\frac{M}{\Gamma_t}$ and $b= r \frac{M}{{{n}\choose{2}}}$, where $M$ is the probability of an edge being added to the graph in $[t,t+\Delta]$, i.e.,  $M=\left(p\frac{\Gamma^o_t}{\Gamma_t}+r(1-\frac{|E_t|}{{{n}\choose{2}}})\right)/\vert E_{t+\Delta}\setminus E_t\vert$ where $\Gamma^o_t$ is the number of open wedges at time $t$. 
\end{thm}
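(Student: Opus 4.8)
The plan is to relate the single-step transition law of the wedge picking model, already made explicit in Section~\ref{sec:def}, to the empirical quantity $f(x)/N_t(x)$ measured over the window $[t,t+\Delta]$, and then read off the slope and intercept of the fitted line. The starting point is the per-step law from Section~\ref{sec:def}: a disconnected pair $(u,v)$ with $d_t(u,v)=x$ is closed in a single step with probability $\tfrac12\bigl(p\tfrac{x}{\Gamma_t}+r\binom{n}{2}^{-1}\bigr)$. This is already affine in $x$, so the whole task is to pin down the constant that converts this single-step rate into the observed fraction of such pairs that become connected across the entire window.

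First I would compute the total per-step probability that \emph{some} edge is added, by summing the per-pair closing probability over all disconnected pairs. Two counting identities do the work: summing $d_t(u,v)$ over disconnected pairs counts exactly the open wedges, $\sum_{(u,v)\notin E_t} d_t(u,v)=\Gamma_t^o$ (for a fixed pair $(u,v)$ there are $d_t(u,v)$ choices of midpoint $w$ giving a path $u\!-\!w\!-\!v$, and the pair is open iff $(u,v)\notin E_t$), and the number of disconnected pairs is $\binom{n}{2}-|E_t|$. Hence the total per-step addition probability is $\tfrac12\bigl(p\tfrac{\Gamma_t^o}{\Gamma_t}+r(1-\tfrac{|E_t|}{\binom{n}{2}})\bigr)$, which is exactly (up to the factor folded into $M$) the numerator in the stated expression for $M$.

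Next I would pass from one step to the window by a Bayes/conditioning argument. Conditioned on a step actually producing an edge addition, the probability that the added edge joins a particular pair with $x$ common neighbors is the ratio of that pair's per-step closing probability to the total per-step addition probability computed above. Over $[t,t+\Delta]$ exactly $|E_{t+\Delta}\setminus E_t|$ edges appear, so in expectation the number of them landing on pairs with $x$ common neighbors is $|E_{t+\Delta}\setminus E_t|\cdot N_t(x)$ times that ratio. Dividing by $N_t(x)$ cancels the count of candidate pairs and leaves $f(x)/N_t(x)=M\bigl(p\tfrac{x}{\Gamma_t}+r\binom{n}{2}^{-1}\bigr)$, where $M$ is precisely the factor rescaling a single-step rate into the aggregate number of additions seen over the window, i.e. $|E_{t+\Delta}\setminus E_t|$ against the total per-step addition probability. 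Matching the coefficient of $x$ gives $a=pM/\Gamma_t$ and the constant term gives $b=rM/\binom{n}{2}$, and inverting these two relations recovers $p$ and $r$, as signalled by the reference to Bayes's rule.

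The step I expect to be the main obstacle — and the one the statement handles at the level of expected counts rather than spelling out — is the passage from per-step rates to window-aggregated counts. It tacitly assumes quasi-stationarity over $[t,t+\Delta]$: that $\Gamma_t$, $\Gamma_t^o$, $N_t(x)$ and $|E_t|$ barely move, that a pair's common-neighbor count $x$ is essentially frozen during the window, and that each candidate pair is closed at most once so the $|E_{t+\Delta}\setminus E_t|$ additions behave like independent draws from the conditional distribution. Justifying this requires $\Delta$ to be small relative to the degree- and wedge-growth bounds established in the analysis of the model, and turning the expected-count identity into the observed empirical ratio $f(x)/N_t(x)$ really needs a concentration argument; I would either invoke those growth bounds to argue the relevant quantities change only by a negligible multiplicative factor over the window, or state the identity as holding in expectation and defer the concentration to the choice of $\Delta$.
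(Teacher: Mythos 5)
Your proposal is correct and takes essentially the same route as the paper's proof: condition on the event that an added edge lands on a pair with $x$ common neighbors (Bayes's rule), assume quasi-stationarity of $\Gamma_t$, $\Gamma^o_t$, and $N_t(x)$ over the window (which the paper enforces via the while condition of Algorithm \ref{algo:learn}), and match the affine coefficients. One shared caveat: the normalizing constant you derive is $|E_{t+\Delta}\setminus E_t|$ divided by the total per-step addition probability, i.e.\ the reciprocal of $M$ as literally defined in the theorem statement --- but the paper's own proof contains the same inversion (it places $M$ in the denominator of $\mathbb{P}({\mathcal B}\vert{\mathcal A})$ yet states the conclusion with $M$ in the numerator), so this reflects a typo in the statement's definition of $M$ rather than a gap in your argument.
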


\begin{proof}
Let $\mathcal A$ be the event that an edge is added to the graph any time in $[t,t+\Delta]$. Using the definition of $M$ and the union bound we will have $\mathbb{P}({\mathcal A}) \approx M\cdot |E_{t+\Delta}\setminus E_t|$. Let $\mathcal B$ the event that an edge is added to a pair with $x$ common neighborhood. If the graph follows the wedge picking model, by Bayes rule we have:

$$
\begin{array}{ll}
    \mathbb{P}({\mathcal B}\vert{\mathcal A}) &=\mathbb{P} ({\mathcal A}\vert {\mathcal B}) \mathbb{P}({\mathcal B})/\mathbb{P}({\mathcal A}) \\
     & = p N_t(x)\left( \frac{x}{\Gamma_t}+r {{{n}\choose{2}}} ^{-1}\right)/\mathbb{P}({\mathcal A})\\
        & = p N_t(d)\left( \frac{x}{\Gamma_t}+r{{{n}\choose{2}}}^{-1}\right)/( M\cdot |E_{t+\Delta}\setminus E_t|).
\end{array}
$$

Thus, if we take $\Delta$  so that $N_t $ and $\Gamma^o_t/\Gamma_t$ (subsequently $M$) do not change significantly (see the while condition in Algorithm \ref{algo:learn}), and using the fact that $\mathbb{P}({\mathcal B}\vert{\mathcal A}) = \frac{f(x)}{|E_{t+\Delta}\setminus E_t|}$, we get

$$
\begin{array}{cl}
   f(x)/N_t(x)&=p\frac{M}{\Gamma_t}x+r \frac{
   M}{{{n}\choose{2}}}
\end{array}
$$

\end{proof}
\noindent\emph{\textbf{Learning $q$. }}
Our experimental analysis on real dynamic graphs suggest a very low rate of edge deletion compared to edge addition (see Section \ref{sec:exp}). In fact, in many of the dynamic graphs we have, edge deletion does not occur at all! Thus, instead of learning $q$'s value, we suffice to assert if $p/q\geq n$, which will be useful later on in the analysis of degree and tri-degree growth (see Lemma \ref{lemma:delta}).
The proof of the following theorem is straightforward and deferred to the full version. 
\begin{thm}\label{thm:learnq} For a graph evolving based on the wedge picking model we have:
\begin{equation}
   p/q\geq  \frac{\vert E_{t+\Delta} \setminus E_t\vert}{\vert E_t \setminus E_{t+\Delta} \vert} \cdot \frac{\Gamma_t}{\Gamma^o_t} \cdot\frac{\vert E_t\vert }{{n\choose 2}}
\end{equation}
\end{thm}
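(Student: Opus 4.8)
The plan is to mirror the proof of Theorem~\ref{thm:learnp}: compute the expected number of edges added and deleted during the window $[t,t+\Delta]$, form their ratio, and extract $p/q$. From the rules in Section~\ref{sec:def}, in one step a connected pair $(u,v)$ is deleted with probability $\tfrac12\bigl(q\,d_t(u,v)/\Gamma_t+q\binom{n}{2}^{-1}\bigr)$ and a disconnected pair is created with probability $\tfrac12\bigl(p\,d_t(u,v)/\Gamma_t+r\binom{n}{2}^{-1}\bigr)$. Summing the deletion probability over $(u,v)\in E_t$ and using $\sum_{(u,v)\in E_t} d_t(u,v)=\Gamma_t-\Gamma^o_t$ (the number of closed wedges) gives expected per-step deletions $\tfrac12 q\bigl((\Gamma_t-\Gamma^o_t)/\Gamma_t+|E_t|/\binom{n}{2}\bigr)$; summing the addition probability over disconnected pairs and using $\sum_{(u,v)\notin E_t} d_t(u,v)=\Gamma^o_t$ gives expected per-step additions $\tfrac12\bigl(p\,\Gamma^o_t/\Gamma_t+r(1-|E_t|/\binom{n}{2})\bigr)$.

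First I would multiply these per-step expectations by the window length and, exactly as in Theorem~\ref{thm:learnp}, take $\Delta$ small enough that $\Gamma_t$, $\Gamma^o_t$ and $|E_t|$ (hence the two rates) do not change appreciably, so that the observed counts may be identified with the expectations: $|E_{t+\Delta}\setminus E_t|\approx \tfrac{\Delta}{2}\bigl(p\,\Gamma^o_t/\Gamma_t+r(1-|E_t|/\binom{n}{2})\bigr)$ and $|E_t\setminus E_{t+\Delta}|\approx \tfrac{\Delta}{2}q\bigl((\Gamma_t-\Gamma^o_t)/\Gamma_t+|E_t|/\binom{n}{2}\bigr)$. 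Dividing the two, the factor $\Delta/2$ cancels and I obtain a closed form for $\frac{|E_{t+\Delta}\setminus E_t|}{|E_t\setminus E_{t+\Delta}|}$ purely in terms of $p,q,r$ and the observable quantities $\Gamma_t,\Gamma^o_t,|E_t|,\binom{n}{2}$. This single relation does not by itself determine $p/q$, since it also involves the unknown $r$, which is precisely what motivates the final bounding step.

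It then remains to pass from this exact ratio to the one-sided bound in the statement, and this is the step I expect to be the real obstacle rather than the bookkeeping above. The claimed inequality is obtained by discarding the nonnegative uniform-addition contribution $r(1-|E_t|/\binom{n}{2})$; the delicate point is to verify that discarding exactly this term weakens the identity in the correct direction, i.e.\ to a \emph{lower} bound on $p/q$ and not an upper one. Concretely, after clearing denominators the claim reduces to the regime inequality $p\,\tfrac{\Gamma^o_t}{\Gamma_t}\cdot\tfrac{\Gamma_t-\Gamma^o_t}{\Gamma_t}\ \ge\ r\,\tfrac{|E_t|}{\binom{n}{2}}\bigl(1-\tfrac{|E_t|}{\binom{n}{2}}\bigr)$, which holds in the sparse, triadic-closure-dominated regime the model targets ($|E_t|\ll\binom{n}{2}$, with open wedges a constant fraction of all wedges and $p$ not dominated by $r$). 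The only remaining, routine, point is to quantify the ``do not change appreciably'' approximation so that replacing expected counts by observed counts perturbs the bound by lower-order terms, which is inherited verbatim from Theorem~\ref{thm:learnp}.
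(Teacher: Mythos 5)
The paper never actually proves Theorem~\ref{thm:learnq}: the authors declare the proof ``straightforward and deferred to the full version,'' so there is no in-paper argument to compare you against. Your route --- mirroring Theorem~\ref{thm:learnp} by summing per-step addition and deletion probabilities, freezing $\Gamma_t$, $\Gamma^o_t$, $|E_t|$ over the window, identifying observed counts with expectations, and taking the ratio --- is surely the computation the authors intended. Your bookkeeping is correct: $\sum_{(u,v)\in E_t}d_t(u,v)=\Gamma_t-\Gamma^o_t$ and $\sum_{(u,v)\notin E_t}d_t(u,v)=\Gamma^o_t$, and I verified your reduction: after clearing denominators, the term $p\,\frac{\Gamma^o_t}{\Gamma_t}\cdot\frac{|E_t|}{\binom{n}{2}}$ appears on both sides and cancels, leaving exactly the condition $p\,\frac{\Gamma^o_t(\Gamma_t-\Gamma^o_t)}{\Gamma_t^2}\ \ge\ r\,\frac{|E_t|}{\binom{n}{2}}\bigl(1-\frac{|E_t|}{\binom{n}{2}}\bigr)$.

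Two points about the step you flag as the ``real obstacle.'' First, that residual condition is not an artifact of your method: the theorem as stated is false without some such hypothesis. Take $G_t$ triangle-free, so $\Gamma^o_t=\Gamma_t$, with $r>0$; then all deletions are uniform and the right-hand side of the theorem evaluates to $\bigl(p+r(1-|E_t|/\binom{n}{2})\bigr)/q$, which strictly exceeds $p/q$. So you have correctly identified a missing hypothesis of the theorem (harmless when $r=0$ or $r$ is tiny, as in the paper's experiments), not a gap you could have closed. Note, though, that your description of the good regime is slightly off: ``open wedges a constant fraction of all wedges'' is not enough, since the triangle-free case has $\Gamma^o_t/\Gamma_t=1$; you need the clustering coefficient bounded away from both $0$ and $1$, i.e., closed wedges must also be a constant fraction. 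Second, you silently adopted the deletion probability $\frac12\bigl(q\,d_t(u,v)/\Gamma_t+q\binom{n}{2}^{-1}\bigr)$ from the end of Section~\ref{sec:def}, even though the literal rule (i) only closes open wedges and the surrounding prose says edge removal is uniform; the paper is internally inconsistent here, and your choice is the one under which the theorem can hold at all --- it is precisely the wedge-deletion term $q(\Gamma_t-\Gamma^o_t)/\Gamma_t$ that pushes the observable ratio below $p/q$, whereas under uniform-only deletion the inequality reverses strictly for every $r>0$. The remaining soft spot --- equating observed counts with expectations requires a concentration argument, not just stationarity of the rates --- is shared with the paper's own proof of Theorem~\ref{thm:learnp} and is acceptable at the paper's level of rigor.
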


\section{Theoretical analysis of  degree  growth}\label{sec:analys}

In real social networks, number of edges and open and closed wedges change rapidly, and keeping track of this change, or calculating them at each time step can be expensive. In this Section we show how to theoretically bound the growth rate of degrees in order to avoid expensive bookkeeping.
\smallskip 

The rate of evolution of the graph at time $t$  is highly dependent on  the clustering coefficient of $G_t$ (defined as the ratio of total number of open and closed wedges). 
When the clustering coefficient is low, we expect a high rate of evolution; meaning not only
the number of edges, open and closed wedges are changing rapidly but also  the probabilities of adding or removing edges are. These rapid changes make design of algorithms a daunting task.
Nevertheless, in the next lemma (Lemma \ref{lemma:delta}) we show that these probabilities do not change significantly in limited time. 

\smallskip 
Let $\Gamma^k_t(u)$ be defined as the number of paths of length $k$ at time $t$ having one end at $u$, and  $\Gamma^k_t(u,v)$ the number of paths  between $u$ and $v$ at time $t$ and of length $k$. For example, $\Gamma^1_t(u,v)=1$ if $(u,v)\in E_t$ and $0$ otherwise, $\Gamma^2_t(u,v)=d_t(u,v)$, and $\Gamma^1_t(u)=d_{t}(u)$. Since $\Gamma^2$ is the most commonly used notation in our paper, we use $\Gamma$ to refer to $\Gamma^2$. Note that  at any time step, closing or  a wedge residing on a path would decrease the length of the path. Thus,  $\Gamma^i_{t+\delta}(u,v)$ will depend on $\Gamma_t^{i+1}(u,v),\dots,\Gamma^{i+\delta}_t(u,v)$. Moreover, for an arbitrary $k$ and  any $i<k$ by connecting two paths $\gamma_1\in \Gamma^i_t(u)$ and  $\gamma_2\in\Gamma^{k-1-i}_t(v)$ we can create a path of length $k$. We use $d_{max}$ to denote the maximum degree of the graph $G_t$. We formalize the above observations, in the following lemma: 
\begin{lemma}\label{lemma:delta}
For any $u,v\in V$ and $\delta \leq e \Gamma_t/(2pd_{max})$, $q =o( p/ d_{max})$, and $r\leq (n^2/\Gamma_t)p$ with probability $1-o(1)$ we have:
\begin{equation}\label{eq:lemmadelta}
\mathbb{E}[d_{t+\delta}(u,v)- d_{t}(u,v) ]\leq 4\delta \frac{d_t(u)d_t(v)}{\Gamma_t}.
\end{equation}
\end{lemma}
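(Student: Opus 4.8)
The plan is to bound the expected one-step increase in the common neighborhood $d_t(u,v)$ and then sum over $\delta$ steps, while controlling the drift of the relevant quantities ($d_t(u)$, $d_t(v)$, $\Gamma_t$) over the short window $[t,t+\delta]$. The common neighborhood $d_t(u,v)$ increases by $1$ whenever a step closes a wedge or adds a random edge $(u,w)$ or $(v,w)$ where $w$ is already adjacent to the other endpoint; it decreases when such an edge is deleted. I would first compute the expected change in $d_t(u,v)$ in a single step of the wedge picking model, conditioning on the current graph $G_t$.

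\smallskip

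\textbf{One-step analysis.} In a single step, to gain a common neighbor of $u$ and $v$ we must place an edge from $u$ (or $v$) to a vertex $w\in N_t(v)$ (resp.\ $w\in N_t(u)$) that is not yet adjacent to $u$ (resp.\ $v$). Under rule (i) the probability of connecting a disconnected pair $(u,w)$ is $p\,d_t(u,w)/\Gamma_t$, so summing $d_t(u,w)$ over candidate neighbors $w\in N_t(v)$ relates the total to the number of length-three paths and ultimately to $d_t(u)d_t(v)$; under rule (ii) each such pair contributes $r\binom{n}{2}^{-1}$. The hypotheses $r\le (n^2/\Gamma_t)p$ and $q=o(p/d_{max})$ are precisely what make the rule-(ii) addition term and all deletion terms negligible against the rule-(i) contribution, so the dominant increase is governed by $p/\Gamma_t$ times a sum bounded by $d_t(u)d_t(v)$. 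Carrying the constants through, I expect the per-step expected increase to be at most $\bigl(2 p\, d_t(u)d_t(v)\bigr)/\Gamma_t$ (the factor $2$ accounting for edges added at either endpoint), leaving a factor-$2$ slack for the next step.

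\smallskip

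\textbf{Summing over $\delta$ steps with frozen parameters.} Summing the one-step bound over $\delta$ steps would give $\sum_{s=t}^{t+\delta-1} 2p\,d_s(u)d_s(v)/\Gamma_s$, but $d_s(u)$, $d_s(v)$, and $\Gamma_s$ change as $s$ ranges over the window. The constraint $\delta\le e\,\Gamma_t/(2p\,d_{max})$ is exactly the budget ensuring that, with probability $1-o(1)$, none of these quantities moves by more than a constant factor over the window: the total expected number of edge modifications in $\delta$ steps is $O(\delta\, p\, d_{max}/\Gamma_t)=O(1)$ relative to the relevant degrees, and a concentration (Markov/union-bound, or a martingale/Chernoff argument since steps are independent given the history) shows $\Gamma_s\ge \tfrac12\Gamma_t$ and $d_s(\cdot)\le 2 d_t(\cdot)$ throughout. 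Replacing the running values by their time-$t$ counterparts up to these constant factors converts the sum into $4\delta\, d_t(u)d_t(v)/\Gamma_t$, matching \eqref{eq:lemmadelta}, with the $1-o(1)$ probability coming from the good event that the parameters stay frozen.

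\smallskip

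\textbf{Main obstacle.} The delicate part is not the one-step drift computation but the ``freezing'' argument: one must show that over the whole window the normalizing quantity $\Gamma_s$ does not shrink and the degrees do not blow up, so that the per-step bound can be telescoped against the fixed time-$t$ values. This requires choosing the right high-probability event, verifying it is implied by $\delta\le e\Gamma_t/(2p d_{max})$, and checking that conditioning on this event does not corrupt the expectation bound (a mild subtlety, since the event depends on the future). I would handle this by bounding the expected total number of edges touched near $u$, $v$ in the window, applying a concentration inequality over the independent coin flips to guarantee the constant-factor stability of $\Gamma_s$ and $d_s(\cdot)$, and then absorbing the failure probability into the $o(1)$ term.
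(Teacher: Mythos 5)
Your one-step drift computation matches the paper's warm-up step, and you correctly identify the roles of $r\le (n^2/\Gamma_t)p$ and $q=o(p/d_{max})$ (the paper likewise uses the probability budget $1-o(1)$ only to rule out deletions). The gap is in the multi-step part, which is the heart of the lemma: your ``freezing'' argument fails. The window is not short enough to freeze anything. With $\delta = e\Gamma_t/(2pd_{max})$, the expected number of new edges incident to a fixed vertex $u$ is as large as $\delta\, p\,\Gamma_t(u)/\Gamma_t \le \delta\, p\, d_t(u) d_{max}/\Gamma_t = (e/2)\, d_t(u)$, i.e.\ degrees grow by a constant multiplicative factor \emph{in expectation} over the window (and by more once you account for the feedback as neighborhoods grow). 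Consequently the event ``$d_s(\cdot)\le 2d_t(\cdot)$ and $\Gamma_s\ge \Gamma_t/2$ throughout'' is not a $1-o(1)$ event: for a vertex of constant degree the number of incident additions over the window behaves like a Poisson variable whose mean is a constant multiple of $d_t(u)$, so the doubling event fails with constant probability, and no Chernoff or martingale argument can help because there is nothing to concentrate at that scale.

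There is a second, independent problem: even if a constant-factor freezing event $\mathcal{G}$ did hold with probability $1-o(1)$, the lemma asserts a bound on an \emph{expectation} whose right-hand side is typically much smaller than $1$. Writing $X = d_{t+\delta}(u,v)-d_t(u,v)$ and splitting $\mathbb{E}[X] = \mathbb{E}[X\mathbf{1}_{\mathcal{G}}]+\mathbb{E}[X\mathbf{1}_{\mathcal{G}^c}]$, the bad-event term can only be bounded by $\delta\,\mathbb{P}[\mathcal{G}^c]$; to keep it below the target $4\delta\, d_t(u)d_t(v)/\Gamma_t$ you would need $\mathbb{P}[\mathcal{G}^c] \ll d_t(u)d_t(v)/\Gamma_t$, which for low-degree vertices is of order $1/\Gamma_t$ --- far beyond anything a union-bound/concentration freezing argument can deliver. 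The paper sidesteps both obstructions by never leaving expectation computations: it writes the step-$i$ increase in terms of $\Gamma^3_{t+i}(u,v)$ and bounds its expectation by a combinatorial back-tracking count --- a length-$3$ path at time $t+i$ must arise from a path of length $j+2$ present at time $t$, contracted by $j$ successive wedge closings (plus rule-(ii) splicings of shorter paths at $u$ and $v$), with multiplicity $\kappa(j+2)=(j+1)!$ --- and then sums the resulting series $\sum_j (j+1)k^j$ with $k<1/2$; this is exactly where the hypothesis $\delta\le e\Gamma_t/(2pd_{max})$ enters and where the constant $4$ comes from. As a minor further point, your constants would not close even on the good event: allowing degrees to double and $\Gamma_s$ to halve turns your per-step bound into $16p\, d_t(u)d_t(v)/\Gamma_t$, so you would prove the lemma with constant $16p$ rather than $4$.
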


\begin{proof}
For simplicity we first assume that we have no edge deletion. We will take care of the deletions later.

To warm up, consider the probability $\mathbb{P}\left[d_{t+1}(u,v)- d_{t}(u,v)> 0\right]$ there are two ways to add to  $d_t(u,v)$: (1) by closing only one wedge  placed on a path of length three between $u$ and $v$, and there are two ways of doing so (rule (i)), (2) by adding an edge between $w\in N_t(v)$ and $u$ or vice versa (rule (ii)).

Thus, $\mathbb{P}\left[(d_{t+1}(u,v)- d_{t}(u,v))> 0\right]\leq 2\Gamma^{3}_{t}(u,v)p/\Gamma_t+ (d_t(u)+d_{t}(v))r/n^2$, equivalently:

$\mathbb{E}[d_{t+1}(u,v)- d_{t}(u,v)]\leq 2\Gamma^{3}_{t}(u,v)p/\Gamma_t+(d_t(u)+d_{t}(v))r/n^2$.

\medskip

We now consider the general case of $\delta$ steps:

\begin{equation*}
 \mathbb{E}[d_{t+\delta}(u,v)- d_{t}(u,v)]= \sum_{i=1}^{\delta} \mathbb{E}[d_{t+i}(u,v)- d_{t+i-1}(u,v)]
=  \sum_{i=1}^{\delta}\mathbb{P}[ \text{creating a wedge at  }t+i].
\end{equation*}

\begin{figure}[h!]
\begin{center}
\setlength{\unitlength}{1cm}

\begin{picture}(2,3.2)
{\put(1.3,3.2){\circle*{0.15}}}
\put(-0.8,2.2){\line(-2,3){0.5}}
{\color{red}
\put(0.2,2.2){\line(-2,1){1.5}}}
{\color{blue}
\put(0.2,2.2){\line(1,1){1}}}
\put(0.7,3.2){$w$}
{
\qbezier(1.2,3.2)(2.6,3.5)(2.6,3)}
{\color{green}
\qbezier(0,2.2)(2.5,3.5)(2.5,3)}

\put(0,2.2){\circle*{0.15}}
\put(2,2.2){\circle*{0.15}}
\put(1,2.2){\circle*{0.15}}
\put(-1,2.2){\circle*{0.15}}
\put(2,2.2){\line(2,3){0.5}}

\put(-1.8,3){$u$}
\put(-1.5,3){\circle*{0.15}}

\put(2.5,3){\circle*{0.15}}
\put(2.6,3){$v$}
\put(0,2.2){\line(1,0){1}}
\put(1,2.2){\line(1,0){1}}
\put(-1,2.2){\line(1,0){1}}

\put(-0.8,0.2){\line(-2,3){0.5}}
{\color{red}
\put(0.2,0.2){\line(-2,1){1.5}}}
{\color{blue}
\qbezier(0.1,0.2)(0.5,1)(2,0.2)}
{\color{green}
\qbezier(2,0.2)(1,1)(-1.5,1)}

\put(0,0.2){\circle*{0.15}}
\put(2,0.2){\circle*{0.15}}
\put(1,0.2){\circle*{0.15}}
\put(-1,0.2){\circle*{0.15}}
\put(2,0.2){\line(2,3){0.5}}

\put(-1.8,1){$u$}
\put(-1.5,1){\circle*{0.15}}

\put(2.5,1){\circle*{0.15}}
\put(2.6,1){$v$}
\put(0,0.2){\line(1,0){1}}
\put(1,0.2){\line(1,0){1}}
\put(-1,0.2){\line(1,0){1}}
\end{picture}
\caption{After three steps, a path of length 5 between $u$ and $v$ can turn to a wedge between them applying  rule (i) (below picture); or alternatively  two disjoint paths can turn to a wedge between $u$ and $v$ by applying rules (i) and (ii) (top picture). The order of edges being added is: red, blue, green.}
\end{center}
\end{figure}
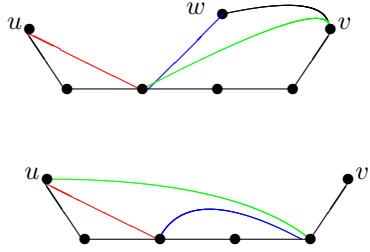
Note that creating a wedge at time $t+j$ is possible by  closing wedges (applying rule (i)) on a path which has length $j+2$ at time $t$. Having rule (ii) in our hand, we can create new paths of length $j$ at time $t$ by placing an edge between a path of length $i$ in $\Gamma^i_{t}(u)$ and a path of length $j-1-i$ in $\Gamma^{j-1-i}_{t}(v)$. Thus the original expectation is bounded by: 
\begin{equation*}
     \sum_{i=1}^{\delta}\frac{d_{t+i}(v)+d_{t+i}(u)r}{n^2}+ 2\Gamma^{3}_{t+i}(u,v)  \frac{p}{{
     \Gamma_{t+i}}}.
\end{equation*}

We now bound the first and second term of the sum:
\begin{equation}
     \sum_{i=1}^{\delta}\frac{d_i(v)+d_i(u)r}{n^2} \leq \frac{\delta r(d_t(v)+d_t(u))(1+\delta/n)}{n^2}
\end{equation}

\begin{dmath}\label{eq:3}
    \Gamma^3_{t+i}(u,v)\leq \sum_{j=1}^i d_{max}^j \left(d_t(u)d_t(v)\frac{ p^j \kappa(2+j)}{ \Gamma_t^j}+
    (d_t(v)+d_t(u))\frac{ 2 r^j}{ n^{2j}}\right)
\end{dmath}

, where $\kappa(2+j)$  is the number of ways a path of length $2+j$ can be turned into one wedge by closing $j$ wedges consecutively. And $\Gamma_t$ is the total number of wedges at time $t$.
 
 It is an easy counting problem to see $\kappa(j+2)=(j+1)!$. Note that  for any $t$, $\Gamma_t\geq d_{\max}^2$, and for each $j$, $\Gamma_t^{2+j}(u,v)\leq d_t(u)d_t(v)d_{\max}^j$, and $\Gamma_t^{j}(u)\leq d_t(u)d_{\max}^j$ where $d_{\max}$ is maximum degree at time $t$.   Note that
  $p^j j!/d_{\max}^j\leq (p
  \delta/d_{\max})^j$. Under the assumption that  $\delta \leq e\Gamma_t/2pd_{max}$ we can simplify Equation \ref{eq:3} to get:

 \begin{equation}
 \begin{array}{ll}
 \sum_{i=1}^{\delta}\Gamma_{t+i}^3(u,v)& \leq d_t(u)d_t(v) \sum_{i=1}^{\delta}\sum_{j=1}^i (j+1) k^j  ,\quad k<1/2\\
 &\leq (4\delta) d_t(u)d_t(v).
 \end{array}
 \end{equation}

Thus, as long as $r\leq (n^2/\Gamma_t)p$ (consistent with our experiments) we have:

 \begin{equation}\label{equ:delta}
 \mathbb{E}[(d_{t+\delta}(u,v)- d_{t}(u,v))]\leq \frac{d_t(u)d_t(v)}{\Gamma_t}  (4\delta).
 \end{equation}
 
 We now show that our initial assumption is not restrictive us as long as $q =o( p/ d_{max})$. In fact we can show that with probability $1-o(1)$, no edge deletion will occur in time $\delta\leq e d_{max}/p$:

\begin{equation*}
\begin{array}{ll}
\mathbb{P}( \text{at least one edge is deleted})\leq &\sum_{t=1}^{\delta}q\sum_{u,v} 1/n^2\\\mathbb{P}( \text{at least one edge is deleted})\leq &\delta p/\omega(d_{max})\leq o(1).
\end{array}
\end{equation*}
\end{proof}

\medskip

Having Lemma \ref{lemma:delta}, for each $v\in V$, we can now bound the rate of its degree or tri-degree (defined later) growth:

\medskip

\noindent\emph{\textbf{Degree of a vertex $u\in V$.}}  
\medskip 

Consider and arbitrary vertex $v\in V$, the following lemma bounds the rate of the growth of $u$'s degree in the wedge picking model: 

\begin{lemma}\label{lemma:degree} Let $\delta \leq e \Gamma_t/2pd_{max}$ and $q =o(d^{-1}_{max})p$,
for any $u\in V_t$:
\begin{equation}\label{eq:deg1}
\mathbb{E}[(d_{t+\delta}(u)- d_{t}(u)) ]\leq \delta \left(2p\frac{ \Gamma_{t}(u)}{\Gamma_t} +  r\frac{1}{n}\right).
\end{equation}

, where $\Gamma_{t}(u)$ is the number of wedges having one endpoint equal to $v$, i.e. $ \Gamma_{t}(u)=\sum_{v\in V}  d_t(u,v)$. 
\end{lemma}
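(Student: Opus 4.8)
The plan is to mirror the structure of the proof of Lemma \ref{lemma:delta}, but now tracking the single quantity $d_t(u)$ instead of a common neighborhood. Since $d_t(u)$ increases by exactly one precisely when an edge incident to $u$ is added in a step, I would first write the telescoping decomposition
\begin{equation*}
\mathbb{E}[d_{t+\delta}(u)- d_{t}(u)] = \sum_{i=1}^{\delta} \mathbb{P}[\text{an edge incident to } u \text{ is added at step } t+i],
\end{equation*}
and then prove a clean one-step bound. Conditioning on the configuration at time $t+i-1$, an edge incident to $u$ can appear in two disjoint ways. Under rule (i) one must close an \emph{open} wedge with $u$ as an endpoint; there are at most $\Gamma_{t+i-1}(u)$ such wedges, each closed with probability $p$ and picked with probability $1/\Gamma_{t+i-1}$, so this contributes at most $p\,\Gamma_{t+i-1}(u)/\Gamma_{t+i-1}$. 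Under rule (ii) one must pick one of the at most $n$ disconnected pairs containing $u$ and connect it, which contributes at most $r/n$ after accounting for the $\binom{n}{2}^{-1}$ sampling probability (exactly as in the warm-up of Lemma \ref{lemma:delta}). This gives the per-step estimate $p\,\Gamma_{t+i-1}(u)/\Gamma_{t+i-1} + r/n$.

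Taking total expectations and summing then yields
\begin{equation*}
\mathbb{E}[d_{t+\delta}(u)- d_{t}(u)] \le \sum_{i=1}^{\delta}\left( p\,\frac{\mathbb{E}[\Gamma_{t+i-1}(u)]}{\Gamma_{t+i-1}} + \frac{r}{n}\right).
\end{equation*}
The rule-(ii) terms immediately sum to $\delta r/n$. For the rule-(i) terms I would invoke Lemma \ref{lemma:delta}: because $\Gamma_{t+i-1}(u) = \sum_{v} d_{t+i-1}(u,v)$, summing the bound of Lemma \ref{lemma:delta} over all $v$ and using the handshake identity $\sum_v d_t(v) = 2|E_t|$ controls the growth of the numerator. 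The crucial point is that the window constraint $\delta \le e\Gamma_t/(2pd_{max})$ — the very hypothesis that collapsed the geometric sum to the constant $4\delta$ in Lemma \ref{lemma:delta} — forces the ratio $\Gamma_{t+i-1}(u)/\Gamma_{t+i-1}$ to stay within a factor $2$ of its initial value $\Gamma_t(u)/\Gamma_t$ throughout the window. Bounding each of the $\delta$ rule-(i) terms by $2p\,\Gamma_t(u)/\Gamma_t$ and adding $\delta r/n$ gives exactly the claimed inequality, with the factor $2$ in front of $p\,\Gamma_t(u)/\Gamma_t$ being precisely the price of this growth over the window.

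Finally, as in Lemma \ref{lemma:delta}, I would handle edge deletions last: under $q = o(p/d_{max})$ a union bound shows that with probability $1-o(1)$ no edge is deleted during the window, so deletions neither lower $d_t(u)$ nor perturb the wedge counts, and the estimate computed under the no-deletion assumption holds up to the $1-o(1)$ factor.

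The step I expect to be the main obstacle is the ratio control in the second paragraph, namely rigorously showing $\mathbb{E}[\Gamma_{t+i-1}(u)]/\Gamma_{t+i-1} \le 2\,\Gamma_t(u)/\Gamma_t$ across the entire window. The numerator is directly governed by Lemma \ref{lemma:delta}, but the denominator $\Gamma_{t+i-1}$ is itself a fluctuating random quantity — closing or adding an edge simultaneously destroys some wedges and creates others — so making the factor-$2$ bound precise requires either a uniform lower bound on the total wedge count over the window or a joint amortization of numerator and denominator. This is exactly where all three hypotheses on $\delta$, $q$, and $r$ are consumed.
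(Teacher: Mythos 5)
Your outline does track the paper's proof: the same telescoping decomposition, the same one-step bound split between rule (i) and rule (ii), the same plan of summing Lemma \ref{lemma:delta} over $v$ via the handshake identity, and the same union-bound treatment of deletions under $q=o(p/d_{max})$. But the step you defer as "the main obstacle" is exactly where the paper's proof does its actual work, and you never supply it, so the proposal has a genuine gap rather than a complete argument. Worse, your diagnosis of that obstacle points in the wrong direction. The denominator $\Gamma_{t+i-1}$ is \emph{not} the difficulty: in the no-deletion window edge additions never destroy a path of length two (closing a wedge does not remove it as a wedge), so $\Gamma$ is monotone nondecreasing and one may simply replace $\Gamma_{t+i-1}$ by the deterministic lower bound $\Gamma_t$. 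This is why the paper writes the per-step probability as $\sum_v d_{t+i-1}(u,v)\,p/\Gamma_t$ from the outset; no "joint amortization of numerator and denominator" is needed.

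The genuine content is controlling the \emph{numerator}. Summing Lemma \ref{lemma:delta} over $v$ gives $\sum_v d_{t+i-1}(u,v)\le \Gamma_t(u)+4(i-1)p\,d_t(u)\sum_v d_t(v)/\Gamma_t$, and after the handshake identity and the sum over $i$ one gets $\delta\,\Gamma_t(u)\bigl(1+\mathrm{corr}\bigr)$ with $\mathrm{corr}\approx \frac{\delta}{2}\cdot\frac{4p\,d_t(u)\,m_t}{\Gamma_t\,\Gamma_t(u)}$, where $m_t=|E_t|$. To force $\mathrm{corr}\le 1$ (which is where the factor $2$ in the statement comes from) the paper uses two ingredients that appear nowhere in your proposal: the pointwise bound $d_t(u)\le \Gamma_t(u)$, and the Cauchy--Schwarz estimate $\Gamma_t/m_t\ge \tfrac12(d_{avg}-1)$, which the paper proves as the very first line of its argument. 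Only after these two reductions does the hypothesis on $\delta$ kill the correction term. Your claim that the window constraint $\delta\le e\Gamma_t/(2pd_{max})$ "forces the ratio to stay within a factor $2$" is an assertion, not a derivation: the handshake sum introduces an $m_t$ that cannot be related back to $\Gamma_t(u)$ without the Cauchy--Schwarz step, and the factor-$2$ bound does not follow from the window constraint alone.
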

\begin{proof}

 Let $\vert E_t\vert=m_t$, we first employ the Cauchy Schwarz inequality to show that $ \Gamma_t/m_t\geq (1/2)(d_{avg}-1)$: $\Gamma_t=\sum_{v\in V} (d_t^2(v)-d_t(v))/2, \sum_{v\in V} d_t^2(v)\geq \left(\sum_{v\in V}d_t(v)\right)^2/n_t $ ; thus, $2\Gamma_t+m_t\geq m_t^2/n_t$.
$$
\begin{array}{lll}
\mathbb{E}[(d_{t+\delta}(u)- d_{t}(u)) ]&\leq \sum_{i=1}^{\delta}\mathbb{E}[(d_{t+i}(u)- d_{t+i-1}(u)) ]\\ 
&\leq \sum_{i=1}^{\delta} \sum_{u\in V} d_{t+i-1}(u,v)\frac{p}{\Gamma_t}+\frac{\delta\cdot r}{n}.\\
\end{array}
$$
Employing Lemma \ref{lemma:delta}:

$$
\begin{array}{lll}
\sum_{i=1}^{\delta} \sum_{u\in V} & d_{t+i-1}(u,v)\\
&\leq  \sum_{i=1}^{\delta} \sum_{u\in V}  \left(d_t(u,v)+\frac{4(i-1)pd(u)d(v)}{\Gamma_t}\right)\\ 
&\leq \Gamma_{t}(v)\delta +\sum_{i=1}^{\delta} \frac{4(i-1)pd(v) m_t}{\Gamma_t}.
\end{array}
$$
Having  $d_t(v)\leq \Gamma_{t}(v)$  and  $\frac {2 \Gamma_t}{m_t}\geq d_{avg}-1$, we get
$$
\begin{array}{lll}
\sum_{i=1}^{\delta} \sum_{u\in V} & d_{t+i-1}(u,v)\\
&\leq (p/\Gamma_t)\Gamma_{t}(v)\delta(1+(\frac{\delta}{2})(\frac{4pd(v)m_t}{\Gamma_t\Gamma_{t}(v)})) \\
&\leq (p/\Gamma_t)\Gamma_{t}(v)\delta(1+(\frac{\delta}{2})(\frac{2p}{d_{avg}}))\\ & \leq 
\frac{\delta \Gamma_{t}(u)p}{\Gamma_t} \left(1+\frac{2p\delta}{\Gamma_t d_{avg}}\right)
\\
& \leq 
\frac{\delta \Gamma_{t}(u)p}{\Gamma_t} (2), \text{ for }\delta \leq \Gamma_t/ d_{\max}p.
\end{array}
$$

\end{proof}

\smallskip

\noindent\emph{\textbf{Tri-degree of a vertex $u\in V$.}}

\medskip 

For each vertex $u$, and time stamp $t$, let  tri-degree of $u$ at time $t$, denoted by $D_t(u)$, be the number of triangles having $u$ as a vertex. We have the following lemma whose proof is similar to proof of Lemma \ref{lemma:degree} and omitted:
\medskip

\begin{lemma}\label{lemma:tri}

For any $u\in V$ and $\delta \leq e d_{max}/2p$  and $q =o(d^{-1}_{max})p$,
\begin{dmath}
\mathbb{E}[(D_{t+\delta}(u)- D_{t}(u)) ]\leq \frac{\delta \zeta^2_{t}(v)p}{\Gamma_t} \left(1+\frac{8 \delta p}{d_{avg}}+\frac{6\delta^2p^2}{\Gamma_t }\right)+\frac{d^2_t(v)
(\delta+\delta^2/n)}{n^2}
\end{dmath}
, where $ \zeta^2_{t}(v)=\sum_{u\in V}  (d_t(u,v))^2.$

\end{lemma}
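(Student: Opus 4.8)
The plan is to follow the template of the proof of Lemma~\ref{lemma:degree}: fix the vertex (call it $v$, matching the right--hand side), obtain a one--step bound on the expected number of triangles newly incident to $v$, then sum over the $\delta$ steps while invoking Lemma~\ref{lemma:delta} to control the drift of the driving quantities, and finally discard edge deletions. For the \emph{one--step count}, a triangle $\{v,x,y\}$ on $v$ appears at step $t+1$ exactly when the added edge completes it, which happens in two essentially different ways. First, we may add an edge incident to $v$, say $(v,x)$ with $x$ previously disconnected from $v$; this creates one new triangle per common neighbour, i.e. $d_t(v,x)$ of them. Under rule~(i) the edge $(v,x)$ is placed with probability proportional to the $d_t(v,x)$ wedges joining $v$ and $x$, i.e. $\approx d_t(v,x)\,p/\Gamma_t$, so summing $d_t(v,x)\cdot d_t(v,x)$ over $x$ gives the leading contribution $\frac{p}{\Gamma_t}\sum_x d_t(v,x)^2=\frac{p}{\Gamma_t}\zeta^2_t(v)$, while rule~(ii) contributes $\frac{r}{n^2}\Gamma_t(v)$. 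Second, we may add an edge between two neighbours of $v$ (closing a wedge centred at $v$), creating exactly one triangle; there are at most $\binom{d_t(v)}{2}$ such pairs, contributing $\binom{d_t(v)}{2}\bigl(p/\Gamma_t+r/n^2\bigr)$. Using $\Gamma_t\ge\binom{d_t(v)}{2}$ and $r\le (n^2/\Gamma_t)p$ to absorb the lower--order pieces, the one--step increment is dominated by $\frac{p}{\Gamma_t}\zeta^2_t(v)$ with a remainder of order $\frac{r}{n^2}d_t^2(v)$, matching the two terms of the claimed bound.

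For the \emph{summation}, I write $\mathbb{E}[D_{t+\delta}(v)-D_t(v)]=\sum_{i=1}^{\delta}\mathbb{E}[D_{t+i}(v)-D_{t+i-1}(v)]$ and insert the one--step estimate, which forces me to bound $\mathbb{E}[\zeta^2_{t+i}(v)]$, $\mathbb{E}[\Gamma_{t+i}(v)]$ and $\mathbb{E}[d_{t+i}^2(v)]$. The new difficulty relative to Lemma~\ref{lemma:degree} is that the leading quantity $\zeta^2_t(v)=\sum_x d_t(v,x)^2$ is a \emph{second} moment of the common--neighbourhood profile, so a single use of Lemma~\ref{lemma:delta} no longer suffices. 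I would linearise the square by noting that one step changes each $d(v,x)$ by at most one, so
\[ d_{t+i}(v,x)^2-d_{t+i-1}(v,x)^2=\bigl(2d_{t+i-1}(v,x)+1\bigr)\,\Delta_i^{(x)},\qquad \Delta_i^{(x)}\in\{0,1\}, \]
and bound $\mathbb{E}[\Delta_i^{(x)}]$ by the warm--up one--step estimate of Lemma~\ref{lemma:delta}, namely $\mathbb{E}[\Delta_i^{(x)}]\le 2\Gamma^3_{t+i-1}(v,x)\,p/\Gamma_t+(d_t(v)+d_t(x))\,r/n^2$. Feeding this back, together with the drift bound $\mathbb{E}[d_{t+i}(v,x)-d_t(v,x)]\le 4i\,d_t(v)d_t(x)/\Gamma_t$ and the $\Gamma^3$ path--counting inequalities of Lemma~\ref{lemma:delta}, the first--order drift of $\zeta^2$ produces the factor $\tfrac{8\delta p}{d_{avg}}$ (the analogue of the $\tfrac{2p\delta}{\Gamma_t d_{avg}}$ factor in Lemma~\ref{lemma:degree}, with an extra $2d(v,x)$ from differentiating the square), the second--order drift produces $\tfrac{6\delta^2 p^2}{\Gamma_t}$, and the rule--(ii) pieces accumulate to $\frac{d_t^2(v)(\delta+\delta^2/n)}{n^2}$, the $\delta^2/n$ arising from the growth of $d_{t+i}(v)$ across the window exactly as in Lemma~\ref{lemma:degree}.

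For the \emph{assumptions and deletions}, the hypotheses $\delta\le e\,d_{max}/2p$ and $\Gamma_t\ge d_{max}^2$ keep the geometric series of ratio $k<1/2$ appearing in the $\Gamma^3$ bounds of Lemma~\ref{lemma:delta} summable, so the accumulated corrections remain within the stated envelope, while $r\le (n^2/\Gamma_t)p$ lets the rule--(ii) terms be folded into the displayed form. Finally, as in Lemma~\ref{lemma:delta}, $q=o(d_{max}^{-1})p$ guarantees that with probability $1-o(1)$ no relevant edge is deleted during the window, so the no--deletion computation is valid up to a $1-o(1)$ factor.

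The genuinely delicate step is the second one: because tri--degree is driven by $\zeta^2_t(v)$, a sum of \emph{squared} common--neighbourhood sizes, its expected drift cannot be read off from Lemma~\ref{lemma:delta} directly but must be obtained through the per--step square--difference identity combined with the path--counting of that lemma, one moment--order above what the degree bound required. Verifying that the resulting corrections collapse to precisely $\bigl(1+8\delta p/d_{avg}+6\delta^2 p^2/\Gamma_t\bigr)$ — that is, pinning down the constants $8$ and $6$ — is the main bookkeeping obstacle.
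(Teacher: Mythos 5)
Your overall route is the one the paper intends: the paper omits this proof entirely, saying only that it is ``similar to proof of Lemma \ref{lemma:degree}'', and your plan --- a one-step expected-increment bound with leading term $\zeta^2_t(v)p/\Gamma_t$, summation over the window, drift control via Lemma \ref{lemma:delta}, and dismissal of deletions through $q=o(d_{max}^{-1})p$ --- is exactly that adaptation, plus the one genuinely new ingredient (step-by-step linearization of $d(v,x)^2$) that the second-moment quantity $\zeta^2_t(v)$ demands. So in spirit you and the paper agree.

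There is, however, a concrete error in your one-step count, in the wedge-closing case. For a disconnected pair $x,y\in N_t(v)$, the probability that the edge $(x,y)$ appears under rule (i) is $d_t(x,y)\,p/\Gamma_t$, not $p/\Gamma_t$: the edge can be created by closing \emph{any} of the $d_t(x,y)$ wedges between $x$ and $y$, not only the one centered at $v$, and $d_t(x,y)\ge 1$ precisely because $v$ is a common neighbour of $x$ and $y$. Hence your expression $\binom{d_t(v)}{2}\bigl(p/\Gamma_t+r/n^2\bigr)$ \emph{underestimates} the expected contribution and cannot be used as an upper bound; worse, even taken at face value, your absorption step does not go through, since $\binom{d_t(v)}{2}\,p/\Gamma_t\le p$ is not dominated by $\zeta^2_t(v)\,p/\Gamma_t$ unless $\zeta^2_t(v)\ge\Gamma_t$, which is false in general. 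The repair is a counting identity: $\sum_{\{x,y\}\subseteq N_t(v)}d_t(x,y)=\sum_{w}\binom{d_t(v,w)}{2}\le\tfrac12\zeta^2_t(v)$, so the wedge-closing case is in fact of the \emph{same} order as the edge-at-$v$ case and folds into the leading $\zeta^2_t(v)p/\Gamma_t$ term after a constant-factor adjustment (covered by the $\tfrac12$ coin-flip probabilities). With that fix your skeleton is sound; what remains open is exactly what you concede --- verifying that the bookkeeping yields the stated constants $8$ and $6$ --- and since the paper supplies no derivation of them either, there is nothing to check that claim against.
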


\begin{remark}
The growth rate of other notions of degree  can be bounded similarly. These notions have been used in the literature  to define various density measures \cite{Babis2013, Gionis:2015:DSD:2783258.2789987}.
\end{remark}
\section{Algorithmic implications: The rest and run strategy}\label{sec:restandrun}
In the dynamic graph literature, algorithms are often designed to update output solution after every single edge update; this is essential when edge updates arrive in an arbitrary (or adversarial) manner.   
Consider an algorithm designed for dynamic graphs which despite having low amortized cost, occasionally performs expensive operations in a row. Batching edge updates improve the run-time of these algorithms significantly. However, when not designed carefully, batching can harm the accuracy of algorithms. 
\smallskip 

In this section, we utilize our previous results  to devise a speed up subroutine based on batching for some algorithms without harming their accuracy. More precisely, for a given graph $G_t$, we identify time $\Delta_t$ where a previously calculated solution still serves as a good approximation for any graph $G_{t'}$; $t'\in[t,t+\Delta_t]$. Our subroutine declares $[t, t+\Delta_t]$ as a ``rest'' interval. During rest intervals  the algorithm only  collects the changes in the graph and does not perform expensive operations. In the ``run'' intervals, the algorithm has its normal performance except since the changes are received in batches it saves  redundant operations; in the experimental section we observe significant speed up due to this simple trick while maintaining the accuracy compared to the optimal solution.

\smallskip 

Given a problem of finding a subgraph $S$ in $G$ satisfying a certain property, e.g., the problem of densest subgraph discovery, the  question we seek to answer is:  

\vspace{0.8cm}

\fbox{
 \parbox{15 cm}{\center{
\emph{\small Given an approximate solution $S$ for $G_t$, what is the largest $\Delta_t$ such that with 
high probability, for any $t'\leq t+\Delta_t$, $S$ is a still ``good'' approximation for $G_{t'}$.\vspace{.25cm}}
}}
}

\vspace{0.8cm}

Note that if an algorithm's functionality is based on the degrees (tri-degree) of vertices of the input graph (or its degree in other induced subgraphs), Lemma \ref{lemma:degree}  (Lemma \ref{lemma:tri}) provides us with a value of $\Delta_t$:

\smallskip
Let $C_t(v)= 2p\frac{ \Gamma_{t}(v)}{\Gamma_t} +  r\frac{1}{n}$, and $\beta_t(1+\epsilon)$ be the change in degrees that the vertices in  $S$ afford to undergo while maintaining the quality within  $(1+\epsilon)$ precision. We define  $\Delta_t$ as follows:
\begin{equation}\label{Eq:tau}
 \Delta_t\dot{=}\min_{v\in V}\tau_{v},\text{ where } \tau_{v}\dot{=}\min \left \{ \frac{e \Gamma_t}{2pd_{max}}, \frac{\beta_t
(1+\epsilon)}{C_t(v)}\right\}.
\end{equation}

Employing Lemma \ref{lemma:degree}, we conclude that using the above value for $\Delta_t$, for any vertex $u$ (in particular the vertices in solution $S$) and $\delta\in [t,t+\Delta_t]$ we will have:

\[
\mathbb{E}[d_{\delta}(u)- d_{t}(u) ]\leq (1+\epsilon) \beta_t.
\]

We formalize this observation in the next section for finding the densest subgraph in a dynamic graph. In section \ref{tridense} we show how to  generalize these results for finding tri-densest subgraph with trivial modifications.

\subsection{The densest subgraph problem}\label{sec:algdens}
In this section, to demonstrate the applicability of the rest and run strategy, we apply it to the problem of finding the densest subgraph. We present a similar algorithm for finding tri-densest subgraph (a subgraph with maximum density of triangles) in the Section \ref{tridense}.  We believe that our work provides initiatives to design of algorithms for finding approximations to the \emph{$k$-clique densest subgraph} problem \cite{Tsourakakis:2015:KDS}.

\smallskip 

Finding maximum cliques in a graph and even approximating it within a reasonable factor is NP hard \cite{Hastadcliqueis}. Thus, researchers often define density of a subgraph as its  induced average degree. In the literature, \emph{the densest subgraph problem} is  discovery of subgraphs with maximum average degree (formal definition in Equation \ref{def:densest}). This problem is  polynomially tractable using flow computations \cite{Goldberg:1984:FMD:894477}, and  a greedy algorithm providing a $2$ approximation \cite{Charikar:2000}, also known as \emph{peeling} is used in practice.  Peeling is a key component of most of  the state of art algorithms  in various massive data settings. For example:   (1) in data streaming model where the network is processed in passes and the goal is to obtain a solution with minimal number of passes and low memory Bahmani et al propose an algorithm based on peeling \cite{Bahmani:2012}, (2) in the MapReduce model where the data  is distributed between different machines a number of algorithms have been suggested based on peeling \cite{Bahmani:2012,Bhattacharya:2015, Epasto:2015} and (3) the dynamic setting in which the graph has an evolving edge or vertex set, algorithms based on peeling have been shown to be effective \cite{Bhattacharya:2015, Epasto:2015}.  A summary of the results  based on peeling is presented in Table \ref{tab:sum}.  It is important to mention here that peeling is not the only main strategy to find densest subgraph and in fact several foremost algorithms maintaining sketches or sparsifiers have been suggested in data streaming and dynamic graph models \cite{Sawlani2019, Esfandiari2015, Mcgregor2015, Mitzenmacher2015,Epasto:2015}. We also remark that we do not aim to beat any of these algorithms theoretically. Our solution is to boost their runtime in practice by employing the \emph{rest and run strategy}.

\begin{table}[hbt]
\small
\begin{center}
\begin{tabular}{|c|c|c|c|c|}
\hline
Authors/year& Model&  Complexity&  Quality& Problem \\
\hline
\hline
Bahmani et al, & Streaming,&\#passes:&&densest\\
2012 \cite{Bahmani:2012}& MR&$\log_{1+\epsilon}n$&$2(1+\epsilon)$&subgraph\\
\hline
Epasto et al, & {Dynamic,}&  ACPU &&densest\\
 2015 \cite{Epasto:2015}& {MR}&$\log_{1+\epsilon} n$ &$2(1+\epsilon)$&subgraph\\
\hline
{Bhattacharya  }  &{Dynamic,}&ACPU&&densest\\
et al, 2015 \cite{Bhattacharya:2015} &{Streaming, MR}& $\log_{1+\epsilon}$ &$4(1+\epsilon)$&subgraph \\
\hline
{Tsourakakis, }& Dynamic,&\#passes:&&tridensest\\
{ 2014 \cite{Tsourakakis14anovel}}& MR&$\log_{1+\epsilon} n$&$3(1+\epsilon)$&subgraph\\
\hline
\end{tabular}
\caption{A summary of the previous work on densest and tridensest subgraph discovery based on ``peeling''. MR stands for MapReduce and ACPU stands for amortized cost per update. }
\label{tab:sum}
\end{center}
\end{table}
\smallskip 

Our proposed solution for a faster densest subgraph discovery algorithm is to equip \emph{ peeling}  with the rest and run strategy.
We show that while maintaining  $4(1+\epsilon)$ approximation  for the densest subgraph problem our algorithm has significant improvement in its running time compared to its counterparts.

\medskip

Our algorithm follow the following three steps: (1) We learn the parameters of the wedge picking model in a pre-processing phase (2) We find rest intervals after each run of the algorithm (3) we optimise the batch of the updated edges to feed them into the peeling algorithm. 

\smallskip

Let the density of graph $G=(V,E)$, denoted by $\rho(G)$, be the ratio of the number of edges in $G$ to the number of vertices of $G$, i.e. $\rho(G)=\vert E \vert/ \vert V\vert$.
In the densest subgraph problem, our goal is to find a subgraph with maximum density. Let  $\beta^{*}(G)$ denote the maximum density among all subgraphs of $G$, and let $densest(G)$ be a subgraph $G$ with maximum density, so
\begin{equation}\label{def:densest}
\beta^{*}(G)= \max_{G'\subseteq G}\rho(G'), ~~ densest(G)= \underset{G'\subseteq G}{\rm argmax}~\rho(G').  
\end{equation}

In the dynamic graph setting algorithms of Epasto et al \cite{Epasto:2015} and Bhattacharya et al \cite{Bhattacharya:2015}, both based on the peeling strategy, present constant factor approximation algorithms, with amortized cost $poly(\log n)$ per edge addition/deletion. Despite having a low amortized cost, the above algorithms fail to output a solution in reasonable time when the size of a network ($n$) is huge ($n\simeq 1M$) while the edges are evolving at a high rate; see Section \ref{sec:exp}.

\smallskip 

In this section, after  briefly presenting the peeling strategy, we introduce our algorithm. In particular, our emphasis is on Epasto et al.'s algorithm \cite{Epasto:2015} which adopts peeling to dynamic graphs, and proves an amortized cost of $\log(n)$ per edge update, while maintaining a $2(1+\epsilon)$ approximate solution. We assume that the reader is familiar with their results and in a few occasions we rely on their work in  our proofs (for more information see \cite{Epasto:2015}). 

\smallskip

\noindent\emph{\textbf{Peeling.}}
Let $\beta^{*}$ be the optimum density as defined in Equation \ref{def:densest}. Partition the vertices of graph $G=\langle V,E \rangle$ into layers $S_0,S_1,\dots , S_k$ as follows to find a subgraph with density at least $\beta^{*}/2(1+\epsilon)$:

$S_0=V$, while $S_{r+1}\neq \emptyset$, derive $S_{r+1}\subseteq S_r$ by removing vertices whose degree in the induced graph $S_{r}$ is less than $2 \beta(1+\epsilon)$.

\smallskip

When the graph is static, a multiplicative $(1+\epsilon)$ approximation of $\beta^{*}$ can be approximated by iterating over all $\log_{1+\epsilon}(n)$ values.  For the dynamic case Epasto et al. suggest maintaining the following invariant if possible, and if not possible rebuilding the layers by iterating over all the $\log_{1+\epsilon}(n)$ values for $\beta$ as in the static case:
\medskip 

\textbf{Invariant:} For a given density $\beta$, parameter $\epsilon$, and current graph $G_t= (V_t, E)$, the nodes of the graph are organized in layers $S_0, S_1, \cdots , S_k$ for $k = \lceil \log_{1+\epsilon}{n} \rceil$, where  $S_0=V_t$, $S_{r+1} \subseteq S_r$ is obtained from removing vertices with degree less than $2\beta(1+\epsilon)$ from $S_r$, and $S_k = \emptyset$.

\
\begin{algorithm}
\begin{algorithmic}[\ref{algo:peel}]
\REQUIRE  threshold $c$ , A series of edge updates: $ \{\langle(u,v)_{t_i},{\chi}_{t_i}\rangle\}_{i=0}^{\infty}$, where $(u,v)_{t_i}$s are pairs of vertices where the update takes place at time $t_i$ and $\chi_{t_i}$ is either addition or removal of an edge. 
\ENSURE  Layers $S_0,S_1, S_2,\dots , S_l$, 
\STATE At time $t_0$: Run Epasto et all algorithm and find $S_1, S_2, \dots , S_l$.
\STATE At time $t_0$, run Algorithm 2 and find $p$, $q$, $r$ and $\mathcal R$.
\IF{${\mathcal R}\geq c$} 
\STATE \textbf{continue}
\ENDIF
\REPEAT
\STATE having $p,q$ and $r$, find $\Delta_t$
\WHILE{$t_i\leq t_0+\Delta_t$}
\STATE Update the graph and do nothing, let $U$ be the set of updated vertices 
\ENDWHILE
\STATE Push all the vertices in $U$ in the stack and run algorithm of Epasto et al. to maintain $S_0,S_1, S_2,\dots , S_l$.
\STATE $t_0=t_0+\Delta_t$.
\UNTIL{$\infty$}
\end{algorithmic}
\caption{Rest and run densest sub graph.}
\label{algo:peel}

\end{algorithm}

\medskip

\begin{algorithm}
\begin{algorithmic}[]
\REQUIRE An initial graph $G_{0}$, a series of edge updates: $ \{\langle(u,v)_{t_i},{\chi}_{t_i}\rangle\}_{i=0}^\infty$, where $(u,v)_{t_i}$s are pairs of vertices where the update takes place at time $t_i$ and $\chi_{t_i}$ is either addition or removal of an edge. 
\ENSURE  $p, q , r$ (as defined in Section \ref{sec:def}), and linear regression's coefficient of determination.
\STATE $t\gets t_0$.
\STATE for each $0\leq x \leq n$, let $f(x)=0$ and $f'(x)=0$.
\STATE $A\gets 0, D\gets 0$  respectively the number of added/deleted edges.

\WHILE{
$\frac{\Gamma^o_t}{\Gamma_t} \in [\frac{1}{1+\epsilon}, 1+ \epsilon] \frac{\Gamma^o_0}{\Gamma_0} $, $\forall x: N_t(x) \in [\frac{1}{1+\epsilon}, 1+ \epsilon] N_0(x),  t \leq \vert E_{0}\vert$}
\STATE $x=d_{t}(u,v)$
\STATE update $f(x), A, D$
\STATE Update $\Gamma_t^o$, $\Gamma_t$, and $N_t$
\STATE $t \gets t + 1$
\ENDWHILE
\STATE $a,b \gets$ applying linear regression to $f(x)/N_{0}(x) = a x +b$.

\STATE Using Theorems \ref{thm:learnp} and \ref{thm:learnq}  find $p,r$, $q$, and the coefficient of determination $\cal R$. 
\RETURN $p, q, r, {\cal R}$
\end{algorithmic}
\caption{Learning Phase.}
\label{algo:learn}
\end{algorithm}
\noindent\emph{\textbf{Our Algorithm: rest and run + Peeling}}.

\medskip

The learning phase of our algorithm which was explained in full detail in Section \ref{sec:learnparams} is presented as Algorithm \ref{algo:learn}. The rest and run intervals are found in the UpdatePeeling, Algorithm \ref{algo:peel}.
The correctness of Algorithm \ref{algo:peel} algorithm can be easily concluded from correctness of Epasto et al's algorithm \cite{Epasto:2015} and the following lemma:

\begin{lemma}\label{lemma:chernoff}For an arbitrary vertex $v\in V$, let $d'(v)$ be degree of $v$ in layer $S_l(v)$ at time $t+\tau(v)$. For $\beta_O\geq c \log(n)$ for constant $c>1$ with probability $1-o(1)$ for all  vertices $v$ in $G\setminus S_l$, we have:
\begin{equation}
d'(v)\leq 4(1+\epsilon) \beta_O.
\end{equation}
\end{lemma}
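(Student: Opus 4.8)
The plan is to combine the expected-degree bound of Lemma \ref{lemma:degree} with a concentration (Chernoff) argument and a union bound over vertices. First I would fix a vertex $v \in G\setminus S_l$ and record why it was peeled: by the peeling invariant, at time $t$ the vertex $v$ was discarded when passing from its layer $S_{l(v)}$ to the next one because its induced degree in $S_{l(v)}$ was strictly below the threshold $2\beta(1+\epsilon)$. Since the guessed density $\beta$ used by the peeling satisfies $\beta \le \beta_O \le (1+\epsilon)\beta$ for the relevant value, this gives the starting bound $d^{S_{l(v)}}_t(v) < 2(1+\epsilon)\beta_O$.

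Next I would bound how much the induced degree of $v$ can grow during the rest interval. The key observation is that the vertex set $S_{l(v)}$ is frozen at time $t$, so the degree of $v$ inside it increases only through edge additions incident to $v$, and this increase is dominated by the increase of the global degree $d_{t+\tau_v}(v) - d_t(v)$. Writing this increase as $X = \sum_{i=1}^{\tau_v} X_i$, where $X_i$ indicates that step $t+i$ adds an edge incident to $v$, Lemma \ref{lemma:degree} (applicable because $\tau_v \le e\Gamma_t/2pd_{max}$) together with the definition of $\tau_v$ in Equation \ref{Eq:tau} yields $\mathbb{E}[X] \le \tau_v\, C_t(v) \le (1+\epsilon)\beta_t$; taking the degree-change budget $\beta_t \le \beta_O$ gives $\mathbb{E}[X] \le (1+\epsilon)\beta_O =: \mu$.

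Then I would apply a multiplicative Chernoff bound to $X$ with deviation factor $2$. Because $\mu \ge (1+\epsilon)c\log n$ under the hypothesis $\beta_O \ge c\log n$, we obtain $\mathbb{P}[X \ge 2\mu] \le (e/4)^{\mu} \le n^{-\Omega(c)}$, which is $o(1/n)$ once $c$ is a large enough constant. A union bound over the at most $n$ vertices of $G\setminus S_l$ shows that, with probability $1-o(1)$, every such $v$ simultaneously satisfies $X < 2\mu = 2(1+\epsilon)\beta_O$. Adding the starting bound to the increase gives $d'(v) < 2(1+\epsilon)\beta_O + 2(1+\epsilon)\beta_O = 4(1+\epsilon)\beta_O$, as claimed; note that bounding at the larger time $t+\tau_v$ is the stronger statement and implies the bound throughout $[t, t+\Delta_t]$ since $\Delta_t = \min_v \tau_v \le \tau_v$.

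The main obstacle is the concentration step, since the indicators $X_i$ are \emph{not} independent: the probability that step $t+i$ touches $v$ depends on the random graph $G_{t+i}$, in particular on the ratio $\Gamma_{t+i}(v)/\Gamma_{t+i}$. I would handle this by working with the conditional expectations $\mathbb{E}[X_i \mid \mathcal{F}_{i-1}]$ and invoking a Chernoff bound for sums of $\{0,1\}$ variables with bounded conditional means (equivalently a supermartingale / Freedman-type inequality). What makes this legitimate is exactly Lemma \ref{lemma:delta} and the choice of a short rest interval: over $[t,t+\tau_v]$ the quantities $\Gamma_{t+i}(v)$ and $\Gamma_{t+i}$ do not change by more than a constant factor, so each conditional mean stays within a constant of its time-$t$ value and $\sum_i \mathbb{E}[X_i \mid \mathcal{F}_{i-1}] \le (1+\epsilon)\beta_O$ holds; the only remaining care is tuning the constant $c$ so that $n\cdot (e/4)^{\mu} = o(1)$.
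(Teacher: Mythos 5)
Your proposal follows essentially the same route as the paper's own proof: bound the expected degree increment over the rest interval via Equation \ref{Eq:tau} and Lemma \ref{lemma:degree}, apply a Chernoff bound to the sum of per-step increment indicators, and finish with a union bound over the vertices of $G\setminus S_l$. In fact you are somewhat more careful than the paper, which leaves the peeling-invariant starting bound $d_t(v) < 2(1+\epsilon)\beta$ implicit and applies the Chernoff bound without acknowledging that the increments $X_i$ are dependent — the supermartingale/Freedman-type fix you sketch (justified by Lemma \ref{lemma:delta} keeping the conditional means stable over the short interval) is exactly the missing detail.
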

\begin{proof}
For each $v$ consider a sequence of random variables $X_1,X_2,\dots X_{\tau(v)}$, $X_i$ indicating degree incrementation of $v$ at time $t+i$. Note that using the values of $\tau$ defined in Equation \ref{Eq:tau} we have $\mathbb{E}[d_{t+\tau(v)}(v)-d_{t}(v)]\leq \beta(1+\epsilon)$. Applying the Chernoff bound we have: $\mathbb{P}\left[d_{t+\tau(v)}(v)-d_{t}(v)\geq 2 \beta(1+\epsilon)\right] \leq  e^{-(1+\epsilon)\beta}\leq \omega(1/n)$. Summing over all vertices in $G\setminus S_l$ and using the union bound we obtain: 
\begin{eqnarray*}
\mathbb{P}\left[\exists v\in G\setminus S_l; ~d_{t+\tau(v)}(v)-d_{t}(v)\geq 
2 \beta(1+\epsilon)\right]\\ 
\leq (n-\log(n))\omega(1/n)=o(1).
\end{eqnarray*}
\end{proof}

The following remark is concluded from Theorem \cite{Epasto:2015} and Lemma \ref{lemma:chernoff}:

\smallskip

\emph{Remark 1.} Algorithm \ref{algo:peel} maintains a $4(1+\epsilon)$ approximation to the densest subgraph problem. 
And if $A$ is the total number of  edges added to the graph in between any two consequent calls of Algorithm \ref{algo:peel} and $U$ is the total number of affected vertices, then since any batch of edges that are added to a single vertex in the graph are processed only once, this will cause the runtime of our algorithm to drop by a ratio of $\nicefrac{A}{U}$ compared to that of Epasto et al \cite{Epasto:2015}.

\medskip

In the next section,
 using Lemma \ref{lemma:tri} we employ the rest and run strategy to find tri-densest subgraphs (subgraphs with maximum average of tri-degree) within $5(1+\epsilon)$ approximation factor.

\subsection{Finding Tri-densest Subgraph}\label{tridense}

Algorithms for densest subgraph may output  subgraphs  with small clustering coefficient and large diameter. Hence studying other similar problems have been suggested in the literature \cite{Babis2013,Tsourakakis:2015:KDS,Tsourakakis14anovel}: one example is searching for subgraphs which have maximal density of triangles i.e.  discovery of \emph{tri-densest subgraphs}. Note that a subgraph with a higher density of triangles  has a higher clustering coefficient and is a better representative of triadic closure.

\medskip  
We now proceed to present out solution to this problem in the wedge picking model:

\smallskip

Let the tri-density of a graph $G=(V,E)$, denoted by $\xi(G)$, be the ratio of the number of triangles in $G$ to the number of vertices of $G$, i.e. $\xi(G)=\vert W^c(G') \vert/ 3\vert  V'\vert$, where  $W^c(G')$ is its number of closed wedges (Note that each triangle counts as three closed wedges). In the tridensest subgraph problem, our goal is to find a subgraph with maximum tridensity. Let $\alpha_{O}(G)$ denote the maximum tridensity among all subgraphs of $G$, and let $tridensest(G)$ be a subgraph with maximum density, so 
$$\alpha_{O}(G)= \max_{G'\subseteq G}\xi(G'), ~~ tridensest(G)= \underset{G'\subseteq G}{\rm argmax}~\xi(G').$$
The problem has a $3(1+\epsilon)$ solution when updates are adversarial: maintain  peeling layers by removing vertices with tridegree less than $3(1+\epsilon)\alpha$ consecutively \cite{Tsourakakis14anovel}. 
Following the  strategy of rest and run, at each time $t$ we run the algorithm $UpdateTriPeeling$ of  \cite{Tsourakakis14anovel}, calculate $\Delta_t$. Then, we wait until  $t+\Delta_t$ for the next call to $UpdateTriPeeling$.

\begin{algorithm}\label{algomatrix2}
\begin{algorithmic}[\ref{algomatrix2}]

\REQUIRE  Layers $S_0,S_1, S_2,\dots , S_l$ , $T$, most recent reading time $t$.
\ENSURE  $5(1+\epsilon)$ approximation to densest subgraph, most recent reading time $t'$
\STATE Let $\Delta_t=\min_{v}\{\tau_{v} \in T\}$;
\STATE Read $\Delta$ edge updates: $U=\{(u_1,v_1), (u_2,v_2),\dots, (u_{\Delta_t},v_{\Delta_t}) \}$;
\STATE $E=E\cup U$
\STATE $G_{t+\Delta_t}=\langle V, E \rangle$
\STATE Update $N=\{N(v)\}_{v\in V}$ and $T=\{\tau(v)\}_{v\in V}$ correspondingly;
\STATE $UpdateTriPeeling(G_{t+\Delta_t},U)$;
\RETURN $S_l$, $t'$
\end{algorithmic}
\caption{Update (tri-Densest).}
\end{algorithm}\label{algo:2}

The following Theorem proves the correctness of Algorithm \ref{algo:2}, and its proof is identical to proof of Lemma \ref{lemma:chernoff} and thus omitted. 

\begin{thm}\label{thm:last} In the wedge picking model, and for $\alpha_{O}\geq c\log(n)$; $c>1$ a constant, there is an algorithm  which maintains a $5(1+\epsilon)$ approximation of tridensest subgraph problem with probability $1-o(1)$ and low amortized cost.  
\end{thm}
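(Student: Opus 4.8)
The plan is to mirror the proof of Lemma~\ref{lemma:chernoff} almost verbatim, with tri-degree $D_t(\cdot)$ in place of degree and Lemma~\ref{lemma:tri} in place of Lemma~\ref{lemma:degree}. First I would define the rest interval for the tri-densest problem in exact analogy with Equation~\ref{Eq:tau}: for each $v$ let $\tau_v$ be the largest $\delta\le e\,d_{max}/2p$ for which the right-hand side of Lemma~\ref{lemma:tri} is at most $(1+\epsilon)\alpha$, where $\alpha$ is the tri-density guess (equivalently the tri-degree budget a peeled vertex may absorb while the maintained solution stays within $(1+\epsilon)$ precision), and set $\Delta_t\dot{=}\min_{v\in V}\tau_v$. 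Since the bound of Lemma~\ref{lemma:tri} is monotone increasing in $\delta$, this choice guarantees $\mathbb{E}[D_{t+\tau(v)}(v)-D_t(v)]\le (1+\epsilon)\alpha$ for every $v$, which is the exact tri-degree counterpart of the one-line expectation bound that drives Lemma~\ref{lemma:chernoff}.

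Second, I would run the concentration argument. Recall that $UpdateTriPeeling$ of~\cite{Tsourakakis14anovel} removes from each layer every vertex whose induced tri-degree is below $3(1+\epsilon)\alpha$; hence at the last rebuild time $t$ every vertex $v\in G\setminus S_l$ had tri-degree strictly below $3(1+\epsilon)\alpha$. For such a $v$ I would introduce increment variables $Y_1,\dots,Y_{\tau(v)}$, where $Y_i$ is the number of triangles gained at $v$ during step $t+i$, so that $D_{t+\tau(v)}(v)-D_t(v)=\sum_i Y_i$ has expectation at most $(1+\epsilon)\alpha$ by the first paragraph. A Chernoff-type tail bound then yields
\[
\mathbb{P}\!\left[D_{t+\tau(v)}(v)-D_t(v)\ge 2(1+\epsilon)\alpha\right]\le e^{-(1+\epsilon)\alpha}= o(1/n),
\]
where the last equality uses $\alpha_{O}\ge c\log n$ with $c>1$ (so $(1+\epsilon)c>1$). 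A union bound over the at most $n$ vertices of $G\setminus S_l$ shows that, with probability $1-o(1)$, every such vertex satisfies $D_{t+\tau(v)}(v)< 3(1+\epsilon)\alpha+2(1+\epsilon)\alpha=5(1+\epsilon)\alpha$. Feeding this tri-degree guarantee into the correctness framework of~\cite{Tsourakakis14anovel} exactly as Lemma~\ref{lemma:chernoff} is fed into that of~\cite{Epasto:2015} yields the claimed $5(1+\epsilon)$ approximation.

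The amortized-cost claim is then immediate and identical to Remark~1: during a rest interval the algorithm only records updates, and at the next call all edges incident to a single vertex are processed once, so work is charged to affected vertices rather than to individual edge updates, giving the same $\nicefrac{A}{U}$ speedup over the per-update algorithm of~\cite{Tsourakakis14anovel}.

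The step I expect to be the real obstacle is the tail bound in the second paragraph, and it is genuinely harder than in Lemma~\ref{lemma:chernoff}. For plain degree each step inserts at most one edge, so the increments $X_i$ are $\{0,1\}$-valued and the textbook multiplicative Chernoff bound applies. For tri-degree this breaks down: a single edge insertion can simultaneously close many wedges at $v$, so $Y_i$ can be as large as $\Theta(d_{max})$, and the $Y_i$ are neither Bernoulli nor independent (the law of $Y_i$ depends on the evolving graph). To make the bound rigorous I would either (i) bound the maximum per-step increment and invoke a Chernoff/Hoeffding inequality for sums of bounded variables --- which weakens the exponent by that increment bound and therefore forces $\Delta_t$ to be short enough that the per-step increment stays $o(\alpha/\log n)$ --- or (ii) replace independence by an Azuma/supermartingale argument built on the one-step expectation bound implicit in Lemma~\ref{lemma:tri}, exactly the device one should already use to justify the degree-case bound. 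Everything else (the definition of $\Delta_t$, the union bound, the additive $3+2=5$ accounting, and the amortized cost) is routine once this concentration statement is secured.
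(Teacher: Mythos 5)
Your proposal takes essentially the same route as the paper: the paper omits this proof entirely, declaring it ``identical to the proof of Lemma~\ref{lemma:chernoff},'' and your reconstruction---rest intervals defined from Lemma~\ref{lemma:tri} in analogy with Equation~\ref{Eq:tau}, a Chernoff-type tail bound on the tri-degree growth of vertices outside $S_l$, a union bound, and the $3(1+\epsilon)\alpha+2(1+\epsilon)\alpha=5(1+\epsilon)\alpha$ accounting against the peeling threshold of \cite{Tsourakakis14anovel}---is exactly that omitted argument. The obstacle you flag in the concentration step is genuine (tri-degree increments are neither Bernoulli nor independent, and a single edge insertion incident to $v$ can close up to $d_{max}$ triangles at $v$), but note that the paper never addresses it: even its proof of Lemma~\ref{lemma:chernoff} applies the Chernoff bound as though the increments were independent $\{0,1\}$ variables, so your proposed Azuma/supermartingale repair (or shortening $\Delta_t$ so that per-step increments stay small) is a strengthening of, not a deviation from, what the paper actually does.
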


\section{Experiments}\label{sec:exp}

\subsection{Asserting the validity of wedge picking model assumptions}\label{sec:exp2}

In order to assert whether or not the probability of adding an edge between any pair of disconnected vertices is linear with respect to the size of their common neighborhood, we run algorithm \ref{algo:learn} the graphs of Table \ref{tab:realdata}. Plotting $f(d)/n(d)$ over $d$, and running linear regression, we observe two main general behaviours for these graphs:
(a) coefficient of determination is large. i.e. higher than $0.6$; see e.g. Figure \ref{fig:learn}a. (b) coefficient of determination is reasonable. i.e. between $0.2$ and $0.5$; see e.g. Figure \ref{fig:learn}b. For most of the graphs with coefficient of determination between $0.2$ and $0.5$, the plot shows an interesting branching behaviour: the dependence  is linear but it is in fact more that one  line; see e.g. Figure \ref{fig:learn}b. 

\smallskip

In both of the above cases we conclude linearity.

\smallskip

\begin{figure}[!hbt]
    \centering
        \begin{subfigure}[b]{ 0.45 \textwidth}
            \centering
            \includegraphics[width=\textwidth]{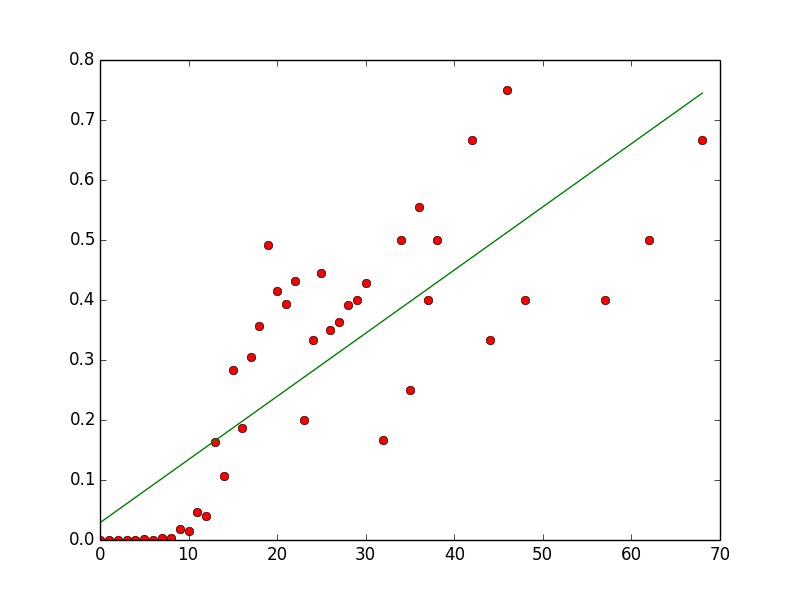}
            \caption{dataset: Digg Friends, Slope: 0.01, intercept: 0.02, coefficient of determination: 0.66. The high value of the coefficient of determination certifies linear relation.}
        \end{subfigure}
        \hfill
        \begin{subfigure}[b]{0.45 \textwidth}
           \centering
           \includegraphics[width=\textwidth]{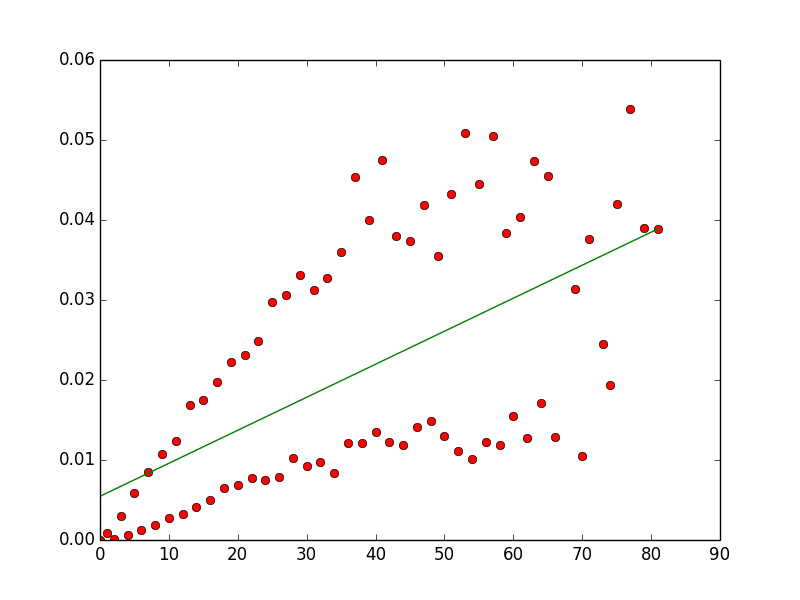}
            \caption{dataset: Facebook wall posts, Slope: 0.0004, intercept: .005, coefficient of determination: 0.36. Despite low coefficient of determination, we conclude linearity since we can observe two separate linear behaviors.}
        \end{subfigure}
        \caption{Plot of $f(d)/n(d)$ over $d$ as defined in Section \ref{sec:learnparams}. The green line shows the result of linear regression.} \label{fig:learn}
\end{figure}

\subsection{The densest subgraph algorithm in real dynamic graphs}
\label{subsec:4.3}
 In this section, our goal is to demonstrate the applicability of the rest and run strategy, in particular when it is  applied to the densest subgraph problem.  
  Thus, we compare the run time and accuracy of Algorithm \ref{algo:peel} and \cite{Epasto:2015} on the data sets presented in Table \ref{tab:realdata}. We note that all the networks of Table \ref{table:dataset} are not dynamic and not helpful for this section. We will later use them to generate synthetic data (Section \ref{Sec:expsyn}).
 Besides, these datasets only evolve through edge addition, thus we set $q = 0$ and we extract $p$ and $r$ through learning these parameters by running Algorithm \ref{algo:learn}.

\begin{table*}[!htb]
\center 
\small
\begin{tabular}{|c|c|c|c|c|c|c|c|c|c|c|}
\hline
data set & $n$ & $\vert E_{t_0}\vert$ & $d_{avg}$& $d_{max}$ & $\Delta_{t_0}$ & $\Gamma_0$ & $\beta_{t_0}$\\
\hline
Facebook (WOSN)& 63,731 & 481327 & 25.640 & 1,098 & 1758 & 33044752 &  91.268\\
\hline
Digg friends & 279,630  & 	1,731,653& 	12.385  & 	12,204 & 5276 & 97367436 & 57.34\\
\hline
YouTube & 3,223,589 & 	9,375,374 & 5.8167  & 	91,751 & 39021 & 1245457386 & 33.987\\
\hline
\end{tabular}
\caption{We used the above dynamic datasets from Koblenz dataset collection  in our experiments of Sections \ref{subsec:4.3} and \ref{Sec:expsyn}}
\label{tab:realdata} 
\end{table*}

\smallskip

\noindent\emph{Accuracy.}  Figure \ref{fig:dynamic} shows the result of running Algorithm \ref{algo:peel} on Digg friends dataset. We observe that algorithm tracks the density outputted by  \cite{Epasto:2015} closely, thus in practice the rest and run strategy does not  harm accuracy of the original algorithm.

\begin{figure}[!htb]
    \centering
\includegraphics[scale=0.36]{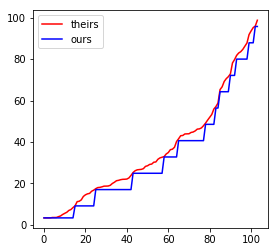}\hspace{1cm}
    \caption{Comparison of accuracy of our algorithm against \cite{Epasto:2015}  on Digg Friends dataset.  The $x$ axis is the number of rest and runs and the $y$ axis is the density output by each algorithm.}\label{fig:dynamic}
\end{figure}

 \noindent\emph{Run time.}
 In Table \ref{table:runtimereal}, we present the runtime of the algorithm for all datasets in Table \ref{tab:realdata}. The runtimes demonstrate that the rest and run strategy is highly effective in reducing when employed on huge graphs. In particular when the original algorithm does not terminate in reasonable time; see e.g. the row corresponding to YouTube.

\begin{table}[htb]
\center 
\small
\begin{tabular}{|c|c|c|}
\hline
data set & Our running time & \cite{Epasto:2015} \\
\hline
Facebook (WOSN)&4s& 12m\\
\hline
Digg Friends &20s& 1m\\
\hline
%Flicker &&\\
%\hline
YouTube& 46s & +8h\\
\hline
%Epinions trust&&\\
%\hline
\end{tabular}
\caption{Runtime comparison on real dynamic data sets.  The $x$ axis is the number of rest and runs and the $y$ axis is the density output by each algorithm.}
\label{table:runtimereal}
\end{table}

\subsection{The densest subgraph algorithm in synthetically generated data}\label{Sec:expsyn}

In this section, our goal is to study the efficiency and accuracy of  Algorithm \ref{algo:peel} in an ideal scenario when the graph's evolution is perfectly consistent with the assumptions of our model,i.e., we have perfect linearity. To achieve this, we take  $r,q=0.001, p=0.75$. We generate the edge sequence separately and  then we run dense graph detection algorithm; doing so we achieve two goals (1)  the time spent on the edge sequence generation has no affect in our algorithm runtime and (2)  our comparison is based on the same benchmark for our algorithm and the algorithms we compare against. 

In subsection \ref{subsec:4.1} we assess the quality of approximation by comparing the  density returned by our algorithm to the optimal density found by a linear program (LP) solver. We use an open source LP solver for our experiments. In Subsection \ref{subsec:4.2}, we switch to larger networks, having  vertex set of order ten thousand to a million, to compare the quality and run time of our algorithm to \cite{Epasto:2015}.

 \smallskip 
 
\textbf{Data Description.}
The datasets are collected from \cite{dataset2} and \cite{dataset1}, and are listed in Table \ref{table:dataset} and \ref{tab:realdata}.
We pre-process these datasets and remove duplicate edges and self loops, so for some data sets the number of edges in our input is slightly smaller than those presented in the original sources.
For dynamic graphs, we sort them according to their time-stamps in our pre-processing. 
A more detailed description of the data sets is presented in Appendix. 

\subsubsection{Small datasets: A comparison of accuracy  against optimal solution}\label{subsec:4.1} 
We evaluate the quality of our solution by comparing it to the the optimal answer. To obtain the maximum density, we solve the linear programming (LP) formulated by Charikar et al. \cite{Charikar:2000}. In Figure \ref{fig:ex1}, we present the two data sets for which we observed  the largest gap  between the outcomes of the two algorithms (ours and the LP). Note that the gap is always less than 1.15 after 40 rounds of rest-and-run.

\begin{figure}[htb]
\begin{center}
\includegraphics[scale=0.5]{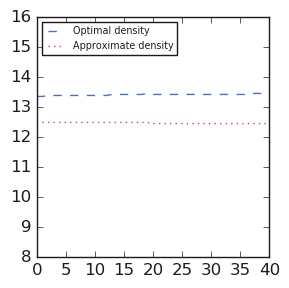}\hspace{3cm}
\includegraphics[scale=0.5]{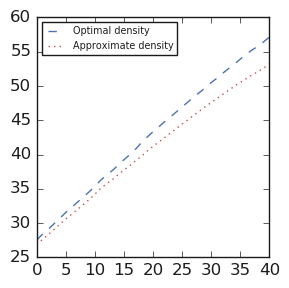}\\
\caption{Quality comparison against OPT \cite{Charikar:2000} on data sets: CA-CondMat (left),  email-Eu-core (right).  The $x$ axis is the number of rest and runs and the $y$ axis is the density output by each algorithm.}
\label{fig:ex1}
\end{center}
\end{figure}

\begin{table*}[htb]
\center 
\small
\begin{tabular}{|c|c|c|c|c|c|c|c|c|}
\hline
data set & $n$ & $\vert E_{t_0}\vert$ & $d_{avg}$& $d_{max}$ & $\Delta_{t_0}$ & $\Gamma_0$ & $\beta_{t_0}$\\
\hline
email-Eu-core & 986 & 16,064 & 32 & 345 & 626 & 1,183,216 & 26.86\\
\hline
CA-GrQc & 5241 & 14,484 & 5 & 81 & 147 & 229,867 & 19.56\\
\hline
CA-CondMat & 23,133 & 93,439 & 8 & 279 & 506 & 1,967,650 & 12.5\\
\hline

soc-Epinions & 75,879 & 405740 & 10 & 3044 & 5517 & 74,201,120 & 59.06\\
\hline 
twitter & 81,306 & 1,342,296 & 33 & 3383 & 6131 & 230,093,274 & 50.07\\
\hline
dblp & 317,080 & 1,049,866 & 6 & 343 & 622 & 21,780,889 & 48.60\\
\hline
Amazon  products & 334,863 & 925,872 & 5 & 549 & 995 & 9,752,186 & 2.76\\
\hline
soc-pokec-relationships & 1,632,803 & 22,301,964 & 27 & 14584 & 26919 & 2,086,073,558 &32.68\\
\hline
\end{tabular}
\caption{We use the above real world graphs which can be found in \cite{dataset2}.  }
\label{table:dataset}
\end{table*}

 \bigskip

\subsubsection{Big datasets: Runtime and quality assessment  on large data sets}\label{subsec:4.2}
We compare the efficiency of our algorithm to \cite{Epasto:2015}, which has amortized cost of a few  microseconds per update.

In order to make our comparison more accurate, we feed  the same sequence of updates  to   our algorithm  and the algorithm of \cite{Epasto:2015} for 100 rounds of rest and run. The only difference between our algorithm and algorithm of \cite{Epasto:2015} is that we apply  the changes after $\Delta_t$ updates at once to the graph and run $UpdatePeeling$. In contrast, \cite{Epasto:2015} calls $UpdatePeeling$ after any single update. We observe that the negligible  theoretical gap between the accuracy of our algorithm vs \cite{Epasto:2015} vanishes in practice, i.e., in our experiments  often both solutions have the same quality. In fact, as all of our plots demonstrate, the ratio of  our algorithm's output and the output of  \cite{Epasto:2015} is always at most $1.65$; see Figure \ref{fig:ex1}. 

Foremost, we observe a remarkable progress in the runtime of our algorithm. As we had expected, large values of $\Delta_t$s boost the total running time of our algorithm to the level that it outperforms algorithms of \cite{Epasto:2015} by factor of 1000.  The result of our comparison is presented in Table \ref{table:runtime}.

\begin{figure}[htb]
\begin{center}

\includegraphics[scale=0.5]{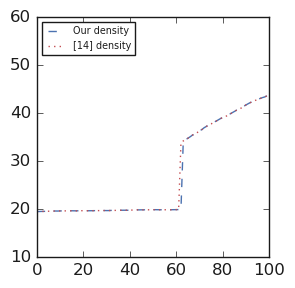}\hspace{3cm}
\includegraphics[scale=0.5]{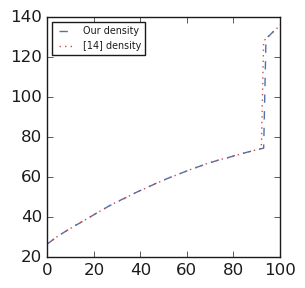}

\caption{Quality comparison against \cite{Epasto:2015} on CA-GrQc (left) and email-Eu-core (right) data sets. The $x$ axis is the number of rest and runs and the $y$ axis is the density output by each algorithm.}
\label{fig:ex2}
\end{center}
\end{figure}

\begin{table}[htb]
\center 
\small
\begin{tabular}{|c|c|c|}
\hline
data set & Our running time & \cite{Epasto:2015} \\

dblp & 3.43s & 32m \\
\hline
Amazon  products &  6s & 48m41s \\

\hline
soc-Epinions & 12.4s & 4h23m12s \\
\hline
twitter& 13.52s & 7h4m55s\\
\hline
soc-pokec-relationships & 2m 18s & +24h\\
\hline
\end{tabular}
\caption{Runtime comparison on larger data sets.}
\label{table:runtime}
\end{table}

\section{Acknowledgement}
Shahrzad Haddadan  was supported by  a Google Focused Research Award, SIR Grant RBSI14Q743, and NSF Award CCF-174074.  The authors are thankful to Flavio Chierichetti for several helpful conversations about this paper. 

\bibliographystyle{abbrv}

\bibliography{main}
\appendix
\subsection{Dataset description}\label{app:dataset-description}
Here we provide information on the data sets used in our experiments. This information is copied from pages on \cite{dataset1} describing the data set for completeness.

\medskip 

\textbf{Email-Eu-core network}
The network was generated using email data from a large European research institution. The data set has been anonymized. There is an edge $(u, v)$ in the network if person $u$ sent person $v$ at least one email. The e-mails only represent communication between institution members (the core), and the dataset does not contain incoming messages  from or outgoing messages to the rest of the world.

\medskip 

\textbf{Collaboration networks}
We have two networks from collaboration network datasets on \cite{dataset1}: CA-GrQc (General Relativity and Quantum Cosmology) and CA-CondMat(General Relativity and Quantum Cosmology). These networks are from e-print arXiv on the related field.  If an author $i$ co-authored a paper with author $j$, the graph contains a undirected edge from $i$ to $j$. If the paper is co-authored by k authors this generates a completely connected (sub)graph on $k$ nodes.

\medskip 

\textbf{Epinions social network}
This is a who-trust-whom online social network of a a general consumer review site Epinions.com. Members of the site can decide whether to ''trust'' each other. All the trust relationships interact and form the Web of Trust which is then combined with review ratings to determine which reviews are shown to the user.

\medskip 

\textbf{Twitter}
This dataset consists of 'circles' (or 'lists') from Twitter. Twitter data was crawled from public sources.

\medskip 

\textbf{DBLP collaboration network}
The DBLP computer science bibliography provides a comprehensive list of research papers in computer science. A co-authorship network is constructed by connecting two authors if they publish at least one paper together.
\medskip

\textbf{Amazon product co-purchasing network}
Network was collected by crawling Amazon website. It is based on "Customers Who Bought This Item Also Bought" feature of the Amazon website. If a product $i$ is frequently co-purchased with product $j$, the graph contains an undirected edge from $i$ to $j$.
\medskip

\textbf{Pokec social network}
Pokec is the most popular on-line social network in Slovakia. The popularity of network has not changed even after the coming of Facebook. Pokec has been provided for more than 10 years and connects more than 1.6 million people.

\subsection{Missing proofs}\label{app:missing-proofs}
In this section, we present the missing proofs from the main body.

\begin{proof}[Proof of Lemma \ref{lemma:tri}]

$$
\begin{array}{lll}
\mathbb{E}[(D_{t+\delta}(u)- D_{t}(u)) ]\leq\\ \sum_{i=1}^{\delta}\mathbb{E}[(D_{t+i}(u)- D_{t+i-1}(u)) ]
\leq \\
\sum_{i=1}^{\delta} \sum_{u\in V} (p_{u,v}(t+i-1))d_{t+i-1}(u,v)
\leq 
\\ 
\sum_{i=1}^{\delta} \sum_{u\in V} (d_{t+i-1}(u,v))^2p/K_t
 \\
\\
\text{~ Thus:}\\
\mathbb{E}[(D_{t+\delta}(u)- D_{t}(u)) ]\leq\\
(p/K_t) \sum_{i=1}^{\delta} \sum_{u\in V}   \left(d_t(u,v)+\frac{4(i-1)pd(u)d(v)}{K_t}\right)^2\leq \\
(p/K_t) \sum_{i=1}^{\delta} \sum_{u\in V}   \left(d_t(u,v)^2+\frac{16(i-1)^2p^2d^2(u)d^2(v)}{K^2_t}+\right.\\
\hspace{2 cm}
\left.\frac{8(i-1)pd(u)d(v)d_t(u,v)}{K_t}\right) 
\leq \\
(p/K_t) \sum_{i=1}^{\delta} \sum_{u\in V}   \left(d_t(u,v)^2+\frac{16(i-1)^2p^2d^2(u)d^2(v)}{K^2_t}+\right.\\
\hspace{2 cm}\left.\frac{8(i-1)pd(u)d(v)d_t(u,v)}{K_t}\right) 
\leq \\
(p/K_t) \sum_{i=1}^{\delta} \left( \zeta^2_{t}(v) +\frac{32(i-1)^2p^2d^2(v)K_t}{K^2_t}+  \frac{8(i-1)pm_td^2(v)}{K_t}\right) 
\leq \\
(p/K_t)  \zeta^2_{t}(v) \left( 1+\frac{32\delta^3p^2d^2(v)}{6K_t\zeta^2_{t}(v) }+  \frac{4 \delta^2 pm_td^2(v)}{K_t\zeta^2_{t}(v) }\right) 
\leq \\(p/K_t)\zeta^2_{t}(v)\delta(1+\frac{8 \delta p}{d_{avg}}+\frac{6\delta^2p^2}{K_t }).
\end{array}
$$

\end{proof}

\end{document}